\author{Maël Dumas}
\author{Anthony Perez}
\affil{Univ. Orl\'eans, INSA Centre Val de Loire, LIFO EA 4022, F-45067 Orl\'eans, France}
\newtheorem{theorem}{Theorem}
\newtheorem{lemma}{Lemma}
\newtheorem{claim}{Claim}
\newtheorem{proposition}{Proposition}
\newtheorem{definition}{Definition}
\theoremstyle{definition}
\theoremstyle{theorem}
\newtheorem{observation}{Observation}
\newtheorem{polyrule}{Rule}
\newenvironment{proofclaim}{
	\noindent \emph{Proof.}
}{%
	\hfill $\diamond$ \\
}
\crefname{claim}{Claim}{Claims}
\crefname{observation}{Observation}{Observations}
\crefname{polyrule}{Rule}{Rules}
\newtcolorbox{mypb}[2][]
{
    enhanced,
    boxed title style = {colframe=white},
    attach boxed title to top left={
        xshift=0.5cm,
        yshift= -3.5mm,     
    },
    top=4mm,
    coltitle=black,
    beforeafter skip=\baselineskip,
    colframe = lightgray,
    colback  = white,
    colbacktitle  = white,
    coltitle = black,  
    fonttitle = \scshape,
    titlerule = 0mm, 
    title    = {#2},
    #1
}
\newcommand{\Pb}[4]{%
    \begin{mypb}{#1}
       \textbf{\textsf{Input}}: #2%
       \par\noindent%
       \textbf{\textsf{#4}}: Does there exist #3?%
       \smallskip%
       \par\noindent%
    \end{mypb}
}
\renewcommand{\leq}{\leqslant}
\renewcommand{\geq}{\geqslant}
\def\eg{{\em e.g.}}
\def\etal{{\em et al.}}
\def\ie{{\em i.e.}~}
\def\busymb{\circ}
\def\VP{V_p}
\def\CL{C_<}
\def\CM{C_\busymb}
\def\CU{C_>}
\def\RL{R_<}
\def\RM{R_\busymb}
\def\RU{R_>}
\def\VM{v_\busymb}
\def\FM{F_\busymb}
\def\GM{G_\busymb}
\def\HM{H_\busymb}
\def\SM{S_\busymb}
\def\WC{comb}
\def\TPE{\textsc{Trivially Perfect Editing}}
\def\TPC{\textsc{Trivially Perfect Completion}}
\def\INST{$(G=(V,E),k)$}
\def\YINST{\textsc{Yes}-instance}
\def\PP{$r$-packing}
\def\DPT{Dumas \etal~\cite{DPT23}}
\newcommand{\SPP}[1]{$(#1)$-packing}
\newcommand{\BLOW}[3]{#1 (#3 \rightarrow #2)}
\newcommand{\size}[1]{\vert #1 \vert}
\newcommand{\OPT}[1]{#1^*}
\crefname{lemma}{Lemma}{Lemmata}
\title{An improved kernelization algorithm for \TPE}
\begin{document}
\maketitle

\begin{abstract}
    In the \TPE{} problem one 
    is given an undirected graph $G = (V,E)$ and an integer $k$ and seeks to 
    add or delete at most $k$ edges in $G$ to obtain a trivially perfect graph. 
    In a recent work, \DPT{} proved that this problem  
    admits a kernel with $O(k^3)$ vertices.
    This result heavily relies on the fact that the size of trivially 
    perfect modules can be bounded by $O(k^2)$ as shown by Drange and Pilipczuk~\cite{DP18}. 
    To obtain their cubic vertex-kernel, \DPT{} then showed that a more intricate structure, 
    so-called \emph{comb}, can be reduced to $O(k^2)$ vertices. 
    In this work we show that the bound can be improved to $O(k)$ for both aforementioned structures and thus obtain a kernel with $O(k^2)$ vertices. 
    Our approach relies on the straightforward yet powerful observation that any large enough structure contains unaffected vertices whose neighborhood remains unchanged by an editing of size $k$, implying strong structural properties. 
\end{abstract}

\section{Introduction}
\label{sec:intro}

In the \TPE{} problem one is given an undirected graph $G = (V,E)$ and an integer $k$ and seeks to \emph{edit} (add or delete) at most $k$ edges in $G$ so that the resulting graph  
is trivially perfect (\ie does not contain any cycle on four vertices nor path on four vertices as an induced subgraph). More formally we consider the following problem:

\Pb{Trivially Perfect Editing}%
{A graph $G=(V,E)$, a \emph{parameter} $k \in \mathbb{N}$}%
{a set $F \subseteq [V]^2$ of size at most $k$, such that the graph 
$H = (V, E \triangle F)$ is trivially perfect}
{Question}

Here $[V]^2$ denotes the set of all pairs of elements of $V$ and $E \triangle F = (E \cup F) \setminus (E \cap F)$ is the symmetric difference between sets $E$ and $F$. We define similarly the deletion (resp. completion) variant of the problem by only allowing to delete (resp. add) edges. 
Graph modification covers a broad range of well-studied problems that find applications 
in various areas. For instance, \TPE{} has been used to  define the community 
structure of complex networks by Nastos and Gao~\cite{NG13} and is closely related to the well-studied graph parameter tree-depth~\cite{Golumbic78,NdM15}.  
Theoretically, some of the earliest NP-Complete 
problems are graph modification problems~\cite{Karp72,GJ02}. Regarding edge (graph) modification problems, one of the most notable one is the  
\textsc{Minimum Fill-in} problem which aims at adding edges to a given graph to obtain a chordal graph  
(\ie a graph that does not contain any induced cycle of length at least $4$). 
 In a seminal result, 
Kaplan \etal~\cite{KST99} proved that \textsc{Mininum Fill-in} admits a parameterized 
algorithm as well as a kernel containing $O(k^3)$ vertices. This result was 
later improved to $O(k^2)$ vertices by Natanzon \etal~\cite{NSS00}. 
Parameterized complexity and kernelization algorithms 
provide a powerful theoretical framework to 
cope with decision problems.

\smallskip

\paragraph*{Parameterized complexity} 
A parameterized problem $\Pi$ is a language of $\Sigma^* \times \mathbb{N}$, 
where $\Sigma$ is a finite alphabet. An instance of a parameterized problem is a pair $(I,k)$ with $I \subseteq \Sigma^*$ 
and $k \in \mathbb{N}$, called the \emph{parameter}. A parameterized problem is said to be \emph{fixed-parameter tractable} 
if it can be decided in time $f(k) \cdot \size{I}^{O(1)}$. An equivalent definition of fixed-parameter tractability is the notion of \emph{kernelization}. Given an instance $(I,k)$ of a parameterized problem $\Pi$, 
a \emph{kernelization algorithm} for $\Pi$ (kernel for short) is a polynomial-time algorithm that outputs 
an equivalent instance $(I',k')$ of $\Pi$ 
such that $\size{I'} \leqslant h(k)$ for some function $h$ depending on the parameter only and $k' \leqslant k$. 
It is well-known that a parameterized problem is fixed-parameter tractable if and only if it admits a 
kernelization algorithm (see \eg{} \cite{FLS+19}). 
Problem $\Pi$ is said to admit a \emph{polynomial kernel} whenever $h$ is a polynomial. 

\paragraph*{Related work} 
Since the work of Kaplan \etal~\cite{KST99} many polynomial kernels for 
edge modification problems have been devised (see \eg~\cite{BP13,BPP10,DDL+22,Guo07,BBC+22,DP18,KU08,DPT21,CPT22}).
There is also evidence that  
under some reasonable theoretical complexity assumptions, some graph modification problems 
do not admit polynomial kernels~\cite{KW13,GHP+13,CC15,MS22}. We refer the reader to a recent comprehensive survey  
on kernelization for edge modification problems by Crespelle \etal~\cite{CDF+23}. 
The \TPE{} problem has been well-studied in the literature~\cite{BHS15,GHS+20,BHH+23,BBC+22,DP18,DPT23,LJY+15,NG13}. 
Recall that trivially perfect graphs are a subclass of chordal graphs that additionally do not contain any path on four vertices as an induced subgraph. These graphs are also known as \emph{quasi-threshold} graphs. 
We note here that while the NP-Completeness of completion and deletion toward trivially perfect graphs 
has been known for some time~\cite{Sharan02,BBD06}, the NP-Completeness of \TPE{} remained open until 
the work of Nastos and Gao~\cite{NG13}. 
Thanks to a result of Cai~\cite{Cai96} stating that graph modification toward \emph{any} graph class 
characterized by a finite set of forbidden induced subgraphs is fixed-parameter tractable, \TPE{} is  
fixed-parameter tractable.   
Regarding kernelization algorithms, Drange and Pilipczuk~\cite{DP18} provided a kernel 
containing $O(k^7)$ vertices, a result that was recently improved to $O(k^3)$ vertices by \DPT. 
These results also work for the deletion and completion variants. For the latter problem, a recent result by Bathie \etal~\cite{BBC+22} improves the bound to $O(k^2)$ vertices. 

\medskip

As part of the proof for the size of their cubic vertex-kernel, \DPT{} subsequently showed the following result. The structures used in \cref{thm:strategy} shall be defined later. 

\begin{theorem}[\cite{DPT23}]
\label{thm:strategy}
Let $(G,k)$ be an instance\footnote{As we shall see \cref{subsec:modules} the instance also needs to be further reduced under standard reduction rules.} of \TPE{} such that the sizes of its \emph{trivially perfect modules} 
and \emph{combs} are 
bounded by $p(k)$ and $c(k)$, respectively. If $(G,k)$ is a \YINST{} then $G$ has $O(k^2 + k \cdot (p(k)+c(k)))$ vertices. 
\end{theorem}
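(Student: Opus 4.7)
The plan is to start from an optimal editing set $F \subseteq [V]^2$ with $|F| \le k$, set $H := G \triangle F$, and exploit the fact that $H$ is trivially perfect while agreeing with $G$ outside the endpoints of $F$. Let $A$ denote the set of vertices incident to some edge of $F$; clearly $|A| \le 2k$. Every \emph{unaffected} vertex, \ie every vertex of $V \setminus A$, has the same neighborhood in $G$ and in $H$, so any module of $H$ contained in $V \setminus A$ is a module of $G$, and any comb of $H$ contained in $V \setminus A$ is a comb of $G$. This is precisely the ``unaffected vertex'' principle highlighted in the abstract.

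Next I would use the classical rooted-forest representation of the trivially perfect graph $H$, in which every vertex becomes a node of a rooted forest $\mathcal{F}$ and two vertices are adjacent in $H$ if and only if one is an ancestor of the other. I would then focus on the minimal subforest $\mathcal{F}_A$ of $\mathcal{F}$ spanned by $A$ together with the roots hosting vertices of $A$. Standard rooted-tree arguments give that $\mathcal{F}_A$ has $O(|A|) = O(k)$ branching nodes, so $V \setminus A$ breaks into $O(k)$ maximal unbranched ancestor-chains inside $\mathcal{F}_A$ and $O(k)$ maximal pendant subtrees of $\mathcal{F}$ that are entirely unaffected.

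The crucial claim is then that each maximal unbranched chain of unaffected vertices in $\mathcal{F}_A$ corresponds to a trivially perfect module of $G$ (consecutive ancestors of such a chain are true twins in $H$, and unaffectedness transfers this to $G$), while each maximal unaffected pendant subtree corresponds to a comb of $G$. Applying the hypotheses on $p(k)$ and $c(k)$, each of the $O(k)$ pieces contributes at most $p(k)+c(k)$ vertices, giving the $O(k\cdot(p(k)+c(k)))$ term. The residual $O(k^2)$ term accounts for the unaffected ``splitter'' vertices along the ancestor-chains that must be kept between consecutive pairs of affected ancestors to preserve the correct neighborhoods into $A$: since ancestry is totally ordered along each root-to-leaf path, $O(1)$ splitters per pair of affected vertices suffice, for a total of $O(|A|^2) = O(k^2)$.

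The main obstacle will be formally establishing the correspondence between unaffected pieces of $\mathcal{F}_A$ and modules or combs of $G$. This is where the preliminary reduction rules alluded to in the footnote will be crucial: without them, twin equivalence classes in $H \setminus A$ could be artificially broken in $G$, and unaffected pendant subtrees might fail to satisfy the comb definition once pulled back to $G$. Once the correspondence is established, the arithmetic $|V| \le |A| + O(k^2) + O(k)\cdot(p(k)+c(k)) = O(k^2 + k\cdot(p(k)+c(k)))$ closes the argument.
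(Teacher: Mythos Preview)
Your overall skeleton --- take a $k$-editing $F$, look at the tree structure of $H = G \triangle F$, mark the affected places, and bound the unaffected pieces via the module/comb hypotheses --- is exactly the paper's strategy. However, the concrete decomposition you describe has the two central correspondences swapped, and as stated the argument does not go through.

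In the paper one works with the universal clique decomposition $(T,\mathcal{B})$ of $H$ (so nodes of $T$ are bags, \ie critical cliques of $H$), lets $A \subseteq V(T)$ be the set of nodes whose bag contains an affected vertex, and takes $A'$ to be the LCA-closure of $A$ together with the roots, so $|A'| = O(k)$. A connected component of $T\setminus A'$ with \emph{two} neighbours in $A'$ consists of a path $t_1,\dots,t_l$ in $T$ together with all subtrees hanging off that path; this whole object is a \emph{comb} of $G$ (the bags $B_{t_i}$ form the shaft, the hanging subtrees form the teeth) and is bounded by the comb hypothesis. The components with a \emph{single} neighbour $a\in A'$, taken together for fixed $a$, form a \emph{trivially perfect module} of $G$ and are bounded by the module hypothesis; there are at most $|A'| = O(k)$ choices of $a$.

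Your proposal inverts both assignments, and neither inverted claim holds. The assertion that consecutive vertices on an unbranched chain of $\mathcal{F}_A$ are true twins in $H$ fails: the chain is unbranched \emph{in $\mathcal{F}_A$}, but in the full forest $\mathcal{F}$ each chain vertex may carry its own pendant subtrees (the future teeth of the comb), so distinct chain vertices have distinct neighbourhoods and the chain is not a module. Relatedly, the ``maximal pendant subtrees'' are not $O(k)$ in number: they can attach to any of the many internal vertices of the chains, so you cannot sum a per-subtree bound over only $O(k)$ terms. The correct move is precisely the opposite of what you wrote: keep each chain \emph{together with} all its side-subtrees as a single comb, and group the remaining subtrees (those attached directly to a node of $A'$) by attachment point into $O(k)$ trivially perfect modules.

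Finally, the $O(k^2)$ term is not produced by ``splitters between pairs of affected vertices''. It comes from the vertices lying in the $O(k)$ bags indexed by $A'$: each such bag is a critical clique of $H$, its unaffected part is therefore contained in a single critical clique of $G$, and \cref{rule:borneCC-TP} caps that at $k+1$ vertices, for $O(k)\cdot(k+1)$ in total (plus the at most $2k$ affected vertices). This is exactly where the standard reduction rules from the footnote enter the bound; without \cref{rule:borneCC-TP} the $O(k^2)$ term would not be available.
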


The proof of~\cite[Theorem 1]{DPT23} actually implies an $O(k^3 + k \cdot (p(k)+c(k)))$ bound and needs a small adjustment for \cref{thm:strategy} to hold. We give a detailed proof of \cref{thm:strategy} for the sake of completeness (\cref{subsec:bound}). 

\smallskip

The cubic vertex-kernel of \DPT{} relied on a result of Drange and Pilipczuk~\cite{DP18} that proved 
that $p \in O(k^2)$ and then used new  reduction rules implying that $c \in O(k^2)$. 

\paragraph*{Our contribution} 
We provide reduction rules and structural properties on trivially perfect graphs 
that will imply an $O(k)$ bound for both 
functions $p$ and $c$ of \cref{thm:strategy}. These new reduction rules allow us to prove the existence a quadratic vertex-kernel for \TPE{}.
To bound the size of trivially perfect modules by $O(k)$, we first reduce the ones that contain a large matching of non-edges with the use of a simple reduction rule. 
To bound the ones that do not contain such structures, we will rely on so-called \emph{combs}, introduced by \DPT. Combs correspond to parts of the graph that induce trivially perfect graphs (but not necessarily modules) with strong properties on their neighborhoods. They are composed of two main parts, called the \emph{shaft} and the \emph{teeth}, that will be independently reduced to a size linear in $k$. The reduction rule dealing with shafts will 
ultimately allow us to bound the size of trivially perfect modules with no large matching of non-edges. 
Our approach relies on the straightforward yet powerful observation that any large enough structure contains unaffected vertices whose neighborhood remains unchanged by an editing of size $k$. 
Finally, we note that our kernel works for both the deletion and completion variants of the problem. 

\paragraph*{Outline} 
\cref{sec:preliminaries} presents some 
preliminary notions and structural properties on (trivially perfect) graphs. \cref{sec:rules} describes known as well as our additional 
reduction rules to obtain the claimed kernelization algorithm while 
\cref{sec:deletion} explain why our kernel is safe  for 
the deletion variant of the problem. We conclude with some perspectives 
in \cref{sec:conclusion}.   

\section{Preliminaries}
\label{sec:preliminaries}

We consider simple, undirected graphs $G=(V,E)$ where $V$ denotes the vertex set of $G$ and $E \subseteq [V]^2$ its edge set. We will sometimes use 
$V(G)$ and $E(G)$ to clarify the context. 
The open (respectively closed) neighborhood of a vertex $u \in V$ is denoted by $N_G(u) = \{v \in V \mid \{u,v\} \in E\}$ 
(respectively $N_G[u] = N_G(u) \cup \{u\}$). Given a subset of vertices $S \subseteq V$ the neighborhood 
of $S$ is defined as 
$N_G(S) = \cup_{v \in S} N_G(v) \setminus S$. 
Similarly, given a vertex $u \in V$ and $S \subseteq V$ we 
let $N_S(u) = N_G(u) \cap S$. In all aforementioned cases 
we forget the subscript mentioning graph $G$ whenever the context is clear. 
Given a subset of vertices $S \subseteq V$ we denote by $G[S]$ the subgraph induced by $S$, that is $G[S] = (S,E_S)$ where $E_S = \{uv \in E:\ u \in S, v \in S\}$.   
In a slight abuse of notation,  
we use $G \setminus S$ to denote the induced subgraph $G[V \setminus S]$.
A \emph{connected component} is a maximal subset of vertices $S \subseteq V$ 
such that $G[S]$ is connected. 
A \emph{module} of $G$ is a set $M \subseteq V$ such that 
for all $u,v \in M$ it holds that $N(u) \setminus M = N(v) \setminus M$. Two vertices $u$ and $v$ are 
\emph{true twins} whenever $N[u] = N[v]$, and a \emph{critical clique} is a maximal set of true twins. 
A vertex $u \in V$ is \emph{universal} if $N_G[u] = V$.
The set of universal vertices forms a clique and is called the \emph{universal clique} of $G$. 
A graph is \emph{trivially perfect} if and only if it does not contain 
any $C_4$ (a cycle on $4$ vertices) nor $P_4$ (a path on $4$ vertices) 
as an induced subgraph. 
In the remainder of this section we describe  characterizations and 
structural properties of trivially perfect graphs. 
The first one relies on the well-known fact that 
any connected trivially perfect graph contains a 
universal vertex (see \eg~\cite{YCC96}). 

\begin{definition}[Universal clique decomposition,~\cite{DFP+15}]
\label{def:ucd}
A \emph{universal clique decomposition} (UCD) of a connected graph $G=(V,E)$ is a pair $\mathcal{T} = \big ( T= (V_T,E_T), \mathcal{B} = \{B_t\}_{t\in V_T} \big )$ where $T$ is a rooted tree and $\mathcal{B}$ is a partition of the vertex set $V$ into disjoint nonempty subsets such that:
\begin{itemize}
    \item if $\{v,w\} \in E$ and $v \in B_t$, $w \in B_s$ then $s$ and $t$ are on a path
    from a leaf to the root, with possibly $s=t$,
    \item for every node $t \in V_T$, the set $B_t$ of vertices is the universal clique of the induced subgraph $G[\bigcup_{s\in V(T_t)}B_s]$, where $T_t$ denotes the subtree of $T$ rooted at $t$.
\end{itemize}
\end{definition}

A simple way of understanding \cref{def:ucd} is to 
observe that such a decomposition can be obtained by 
removing the set $U$ of universal 
vertices of $G$ and then recursively repeating this process on every  
trivially perfect connected component of $G \setminus U$. %
Drange \etal~\cite{DFP+15} showed that a connected graph admits a UCD if and only if it is trivially perfect. 
Using the notion of UCD, \DPT{} proved the 
following characterization for trivially perfect graphs that will be heavily used in our reduction rules. A collection of subsets $\mathcal{F} \subseteq 2^U$ over some universe $U$ is a \emph{nested family} if $A \subseteq B$ or $B \subseteq A$ holds for any $A,B \in \mathcal{F}$. 

\begin{lemma}[\cite{DPT23}]
\label{lem:TP_recons}
Let $G=(V,E)$ be a graph, $S \subseteq V $ a maximal clique of $G$ and $\{K_1,...,K_r\}$ the set of connected components of $G\backslash S$. The graph $G$ is trivially perfect if and only if the following conditions are verified: 
\begin{enumerate}[(i)] 
    \item\label{it1:recons}  $G[S \cup K_i]$ is trivially perfect, $1\leq i \leq r$
    \item\label{it2:recons}  $\bigcup_{1\leq i \leq r} \{N_G(K_i)\}$ is a nested family
    \item\label{it3:recons}  $\forall u\in K_i, \forall v \in  N_G(K_i), \{u,v\} \in E$, $1\leq i \leq r$. In other words, $K_i$ is a module of $G$.
\end{enumerate}
\end{lemma}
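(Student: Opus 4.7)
The plan is to prove both directions of the equivalence by working directly with the obstructions $P_4$ and $C_4$, using the maximality of the clique $S$ and the fact that $N_G(K_i) \subseteq S$ for every $i$.

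For the forward direction, assume $G$ is trivially perfect. Property (i) is immediate since the class is closed under induced subgraphs. For (ii), if $N_G(K_i)$ and $N_G(K_j)$ were incomparable I would pick $a \in N_G(K_i) \setminus N_G(K_j)$ and $b \in N_G(K_j) \setminus N_G(K_i)$ together with witnesses $u \in K_i$ with $ua \in E$ and $v \in K_j$ with $vb \in E$; then $ab \in E$ (since $a, b \in S$), and $u, a, b, v$ forms an induced $P_4$ because $u, v$ lie in distinct components of $G \setminus S$ and $a, b$ were chosen to miss $N_G(K_j)$ and $N_G(K_i)$ respectively. For (iii), assume for contradiction that $u \in K_i$, $v \in N_G(K_i)$, and $uv \notin E$, and pick $u' \in K_i \cap N_G(v)$ at minimum distance $\ell$ from $u$ in $G[K_i]$. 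When $\ell \geq 2$, a shortest $u$-$u'$ path $u_0, \ldots, u_\ell$ in $G[K_i]$ produces an induced $P_4$ on $u_{\ell-2}, u_{\ell-1}, u', v$: the shortest-path property rules out chords inside the path, while the minimality of $\ell$ forbids both $u_{\ell-1} v \in E$ and $u_{\ell-2} v \in E$. When $\ell = 1$, the triple $u, u', v$ gives only three vertices, and one invokes the maximality of $S$ to produce a fourth: since $u' \notin S$ and $S \cup \{u'\}$ cannot be a clique, there exists $s \in S$ with $u' s \notin E$, and necessarily $s \neq v$ so $sv \in E$; then the set $\{u, u', v, s\}$ induces a $P_4$ if $us \notin E$ and a $C_4$ if $us \in E$.

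For the reverse direction, assume (i)--(iii) and suppose for contradiction that $G$ contains an induced $P_4$ or $C_4$ on a vertex set $X$. Since $S$ is a clique we have $X \not\subseteq S$, and by (i) $X$ is not contained in any $S \cup K_i$; hence there exist $i \neq j$ together with $x \in X \cap K_i$ and $y \in X \cap K_j$, which are non-adjacent. By (ii), assume without loss of generality that $N_G(K_i) \subseteq N_G(K_j)$. In the $C_4$ case, the two common neighbors of the diagonal pair $\{x, y\}$ are each adjacent to a vertex of $K_i$ and to a vertex of $K_j$, so they both lie in $S$ and are therefore adjacent, contradicting the fact that they are themselves diagonal in the $C_4$. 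In the $P_4$ case I split on which of the three non-edges equals $\{x, y\}$; for each subcase, placing the two remaining vertices in $S$, $K_i$, or $K_j$ and using (iii) to propagate any edge between $S$ and a vertex of $K_{i'}$ to the whole of $K_{i'} \cap X$, combined with the nested condition (ii) and the clique property of $S$, produces a forced edge that contradicts the $P_4$ structure.

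The main obstacle is the $\ell = 1$ subcase of (iii) in the forward direction, where only three vertices are naturally available and the maximality of $S$ is essential to produce the fourth one needed to exhibit either a $P_4$ or a $C_4$. The rest of the argument is a direct case analysis on where each vertex of a potential obstruction lies in the partition $V = S \sqcup \bigsqcup_i K_i$.
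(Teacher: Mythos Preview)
The paper does not prove this lemma; it is quoted verbatim from~\cite{DPT23} and used as a black box. So there is no in-paper proof to compare against, and your task was effectively to supply one from scratch.

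Your argument is correct. The forward direction is complete: the $P_4$ construction for~(ii) is clean, and for~(iii) you correctly split on the distance $\ell$ and use maximality of $S$ in the $\ell=1$ case to manufacture the missing fourth vertex. The reverse direction is also sound, though your write-up for the $P_4$ case is more of a plan than a proof. For the record, the full case analysis does go through: once you know $x\in K_i$, $y\in K_j$ are non-adjacent with $N_G(K_i)\subseteq N_G(K_j)$, every placement of the remaining two $P_4$-vertices in $S$, $K_i$, or $K_j$ forces either an $S$--$S$ edge, or via~(iii) an edge from a vertex of $N_G(K_i)$ (hence of $N_G(K_j)$) to all of $K_j\cap X$, producing a chord that destroys the $P_4$. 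In a final version you should spell out at least the three positions of the non-edge $\{x,y\}$ in the $P_4$ rather than leaving this as a one-sentence summary.
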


In the remainder of this paper, a \emph{$k$-editing of $G$ 
into a trivially perfect graph} is a set $F \subseteq [V]^2$ such that $\size{F} \leq k$ and the 
graph $H = (V, E \triangle F)$ is trivially perfect.  
Here $E \triangle F = (E \cup F) \setminus (E \cap F)$ 
denotes the symmetric difference between sets $E$ and $F$. 
For the sake of readability, we simply speak of $k$-editing of $G$. 
We say that $F$ is a $k$-completion (resp. $k$-deletion) when $H = (V, E \cup F)$ 
(resp. $H = (V, E \setminus F)$) is trivially perfect. 
A vertex is \emph{affected} by a $k$-editing $F$ 
if it is contained in some pair of $F$ and 
\emph{unaffected} otherwise.  

\paragraph*{Packing, anti-matching and blow-up} 

We now define some structures and operators that will be useful for our kernelization algorithm. We assume in the remainder of this section that we are given a graph $G=(V,E)$. 
The notion of \emph{\PP{}} will be used in reduction rules to ensure the 
existence of unaffected vertices in ordered sets of critical cliques or of 
trivially perfect modules. 

\begin{definition}[\PP]
    Let $\mathcal{S} = \{C_1, \ldots, C_q\}$ be an ordered 
    collection of pairwise disjoint subsets of $V$.
    We say that $\mathcal{C} \subseteq \mathcal{S}$ is a \emph{\PP{}} of $\mathcal{S}$ 
    if $\mathcal{C} = \{C_1, \ldots, C_p\}$ for $1 \leqslant p \leqslant q$, 
    $\sum_{i=1}^p\size{C_i} \geqslant r$ and the number of vertices contained in $\mathcal{C}$ is minimum for this property. 
\end{definition}

\noindent In a slight abuse of notation we use $\mathcal{C}$ to denote both $\{C_1, \ldots, C_p\}$ and the set $\cup_{i=1}^p C_i$.

\begin{observation}
    \label{obs:packing}
    Let $\mathcal{S} = \{C_1, \ldots, C_q\}$ be an ordered 
    collection of pairwise disjoint subsets of $V$ such that $\size{C_j} \leqslant c$, for $1 \leqslant j \leqslant q$ and some integer $c > 0$. 
    Let $\mathcal{C} = \{C_1, \ldots, C_p\}$ be a \PP{} of $\mathcal{S}$. Then $\sum_{i=1}^p\size{C_i} \leqslant r+(c-1)$. 
\end{observation}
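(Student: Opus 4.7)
The plan is a one-line argument based on the minimality clause in the definition of an $r$-packing. First I would observe that if $\mathcal{C} = \{C_1, \ldots, C_p\}$ is an $r$-packing of $\mathcal{S}$, then the strictly shorter prefix $\{C_1, \ldots, C_{p-1}\}$ cannot itself satisfy the covering inequality $\sum_{i=1}^{p-1} |C_i| \geq r$; otherwise it would contain fewer vertices than $\mathcal{C}$ (since $|C_p| \geq 1$) while still being a prefix of $\mathcal{S}$ whose total size is at least $r$, contradicting the minimality requirement in the definition of $r$-packing. Hence $\sum_{i=1}^{p-1} |C_i| \leq r-1$.

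Then I would simply add back $C_p$, whose size is bounded by $c$ by hypothesis, to obtain
\[
\sum_{i=1}^{p} |C_i| \;=\; \sum_{i=1}^{p-1} |C_i| \;+\; |C_p| \;\leq\; (r-1) + c \;=\; r + (c-1),
\]
which is the desired bound. The corner case $p = 1$ is handled by convention, as the empty sum is $0 \leq r-1$ whenever $r \geq 1$ (and if no $r$-packing exists, the statement is vacuous).

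There is no real obstacle here: the content of the observation is entirely encoded in the minimality clause of the definition, and the proof is essentially a restatement. The only point one must be slightly careful about is the direction of the minimality condition (minimum \emph{number of vertices} rather than minimum number of sets $p$), but since the $C_i$ are nonempty, removing $C_p$ indeed decreases the vertex count, so the two notions agree for prefixes.
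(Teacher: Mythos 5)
Your proof is correct and follows the same argument as the paper's: use minimality to conclude $\sum_{i=1}^{p-1}|C_i| \leq r-1$, then add $|C_p| \leq c$. Your explicit remarks on the $p=1$ corner case and on why minimality over vertex count coincides with minimality over prefix length (via nonemptiness of the $C_i$) are slightly more careful than the paper's terse proof, but the content is the same.
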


\begin{proof}
    Since $\sum_{i=1}^p\size{C_i} \geqslant r$ and 
    the number of vertices in $\mathcal{C}$ 
    is minimum for this property 
    we have that $\sum_{i=1}^{p-1}\size{C_i} \leqslant r-1$. The result follows from the fact that $\size{C_p} \leqslant c$. 
\end{proof}

\begin{definition}[Anti-matching]
    \label{def:am}
    An \emph{anti-matching} of $G$ is a set of pairwise disjoint pairs $\{u,v\}$ of vertices of $G$ such that $\{u,v\} \not \in E$.
\end{definition} 

\noindent In a slight abuse of notation we denote by $V(D)$ the set of vertices contained in pairs of an anti-matching $D$. 

\begin{observation}
    \label{obs:c4dam}
    Let $(G,k)$ be a \YINST{} of \TPE{} and $M$ be a module containing a $(k+1)$-sized anti-matching. 
    Let $F$ be a $k$-editing of $G$ and $H = G \triangle F$. 
    Then $N_G(M)$ is a clique in $H$.
\end{observation}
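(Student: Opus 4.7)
The plan is to argue by contradiction: assume $N_G(M)$ fails to be a clique in $H$ and pick $x,y \in N_G(M)$ with $\{x,y\} \notin E(H)$. The goal is then to exhibit an induced $C_4$ in $H$ on four vertices of the form $\{x, u_i, y, v_i\}$ for some anti-matching pair $\{u_i, v_i\}$, which contradicts $H$ being trivially perfect.

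Since $M$ is a module of $G$ and $x, y \in N_G(M)$, both $x$ and $y$ are adjacent in $G$ to every vertex of $M$. In particular, for each of the $k+1$ anti-matching pairs $\{u_i, v_i\}$ the five pairs
\[
\{u_i, v_i\},\ \{x, u_i\},\ \{x, v_i\},\ \{y, u_i\},\ \{y, v_i\}
\]
— call them the \emph{structure pairs} of index $i$ — have known status in $G$: $\{u_i, v_i\}$ is a non-edge while the other four are edges. If none of these five pairs lies in $F$, these statuses are preserved in $H$, and together with $\{x,y\} \notin E(H)$ they force $\{x, u_i, y, v_i\}$ to induce a $C_4$ in $H$, yielding the desired contradiction.

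It therefore suffices to find some $i \in \{1, \ldots, k+1\}$ whose structure pairs are all avoided by $F$. The key observation is that a single pair $\{a,b\} \in F$ can be a structure pair for at most one anti-matching pair: every structure pair has at least one endpoint in $M$, each vertex of $M$ lies in at most one anti-matching pair, and $x, y \notin M$. A short case analysis on the positions of $a$ and $b$ (whether they lie in $M$ or in $\{x,y\}$) confirms this; the only mildly subtle case is when both $a$ and $b$ lie in $M$ but in two distinct anti-matching pairs, in which case $\{a,b\}$ is in fact not a structure pair of any anti-matching pair at all. Since $|F| \le k < k+1$, pigeonhole yields the required index $i$. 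I do not anticipate a significant obstacle beyond that case analysis, which is purely combinatorial.
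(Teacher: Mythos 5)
Your proposal is correct and follows essentially the same strategy as the paper's proof: both arguments locate, by a pigeonhole count against $|F|\le k$, an anti-matching pair $\{u_i,v_i\}$ whose relevant incident pairs avoid $F$, and then exhibit the induced $C_4$ on $\{x,u_i,y,v_i\}$. The only cosmetic difference is in the counting: you track the five structure pairs tied to the fixed non-edge $\{x,y\}$, while the paper uses the coarser (but equally valid) condition that $\{u_j,v_j\}\notin F$ and neither $u_j$ nor $v_j$ is incident in $F$ to any vertex outside $M$.
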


\begin{proof}
 Let $D = \{\{u_i,v_i\} \mid 1 \leqslant i \leqslant k+1\}$ be a $(k+1)$-sized anti-matching of $M$. Assume for a contradiction that $N_G(M)$ is not a clique in $H$ and let $\{u,v\}$ be a non-edge of $H$ with $u,v \in N_G(M)$. 
 Since $\size{F} \leqslant k$ there 
    exists $1 \leqslant j \leqslant k+1$ 
    such that $\{u_j,v_j\} \notin F$ and for every $x\in V(G)\setminus M$, $\{u_j,x\},\{v_j,x\}\notin F$. 
    Hence $\{u_j,u,v_j,v\}$ induces a $C_4$ in $H$, a contradiction.
\end{proof}

We conclude this section by introducing a gluing operation on trivially perfect graphs,  
namely \emph{blow-up}, that will ease the design of some reduction rules. 

\begin{definition}[Blow-up]
     \label{def:blowup}  
     Let $u$ be a vertex of $G = (V,E)$ and 
     $H = (V_H, E_H)$ be any graph. 
     The \emph{blow-up of $G$ by $H$ at $u$}, denoted 
     $\BLOW{G}{H}{u}$ is the graph 
     obtained by replacing 
     $u$ by $H$ in $G$. More formally: 
     $$
     \BLOW{G}{H}{u}= \big ( (V \setminus \{u\}) \cup V_H , E(G \setminus \{u\})\cup  E_H  \cup (V_H \times N_G(u) \big )
     $$
\end{definition}

\begin{proposition}
     \label{prop:blowup}
     Assume that $G$ is trivially perfect and let $u$ be a vertex of 
     $G$ such that $N_G[u]$ is a clique. For any trivially perfect 
     graph $H$, the graph $\BLOW{G}{H}{u}$ is trivially perfect. 
 \end{proposition}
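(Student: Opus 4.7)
The plan is to prove the statement by contradiction, exploiting the fact that $V_H$ is a module of $G' = \BLOW{G}{H}{u}$ whose external neighborhood in $G'$ is exactly $N_G(u)$, which is a clique since $N_G[u]$ is a clique by hypothesis. I would suppose $G'$ contains an induced $P_4$ or $C_4$ on a vertex set $X$ and split the analysis by $s = \size{X \cap V_H}$.

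The boundary cases are immediate. If $s = 0$, the forbidden subgraph lives in $G \setminus \{u\} \subseteq G$, and if $s = 4$ it lives in $H$, each of which is trivially perfect. If $s = 1$, the unique vertex of $X$ in $V_H$ has the same external adjacency in $G'$ as $u$ has in $G$, so replacing it by $u$ exhibits the same induced $P_4$ or $C_4$ in $G$. If $s = 3$, the single outside vertex $d$ is either in $N_G(u)$ (hence adjacent to all three module vertices of $X$) or not (hence adjacent to none), giving $d$ degree $3$ or $0$ in $G'[X]$, both of which are impossible in a $P_4$ or a $C_4$.

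The core of the argument is the case $s = 2$: write $X = \{a,b,c,d\}$ with $a,b \in V_H$ and $c,d \notin V_H$. The module property forces the symmetries $\{a,c\} \in E(G') \iff \{b,c\} \in E(G')$ and $\{a,d\} \in E(G') \iff \{b,d\} \in E(G')$. I would enumerate the three possible non-edge pairings of a $C_4$ and the placements of $a,b$ as endpoints or internal vertices of a $P_4$; in every configuration except one, the symmetry above yields an immediate contradiction (an edge forced equal to a non-edge). The one surviving configuration is a $C_4$ in which $c$ and $d$ are each adjacent to both $a$ and $b$. But then $c,d \in N_G(u) \subseteq N_G[u]$, and because $N_G[u]$ is a clique we must have $\{c,d\} \in E(G) \subseteq E(G')$, contradicting the $C_4$ requirement that $c$ and $d$ be non-adjacent.

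I expect the main (if mild) obstacle to be a clean, non-tedious treatment of the $s=2$ case; the fact that the hypothesis $N_G[u]$ is a clique is used exactly once, precisely to kill the last $C_4$ configuration, suggests this hypothesis is tight and also suggests that no shortcut through \cref{lem:TP_recons} or a UCD-based argument avoids this specific checking.
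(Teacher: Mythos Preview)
Your proof is correct: the case analysis on $s = \size{X \cap V_H}$ is sound, the module symmetry dispatches every $P_4$ configuration and all but one $C_4$ configuration when $s=2$, and the clique hypothesis on $N_G[u]$ kills the remaining diagonal-$C_4$ case exactly as you describe.

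However, your approach differs from the paper's, and your closing remark is mistaken. The paper does take the shortcut through \cref{lem:TP_recons}: it picks a maximal clique $S$ of $G$ containing $N_G(u)$ (which exists and avoids $u$ precisely because $N_G[u]$ is a clique), applies the forward direction of \cref{lem:TP_recons} to get components $K_1,\ldots,K_r$ with $K_1 = \{u\}$, replaces $K_1$ by $H$, and applies the reverse direction. This is a three-line argument with no case analysis. What your approach buys is self-containment---it does not rely on the equivalence in \cref{lem:TP_recons} and would work in any setting where only the forbidden-subgraph characterization is available. What the paper's approach buys is brevity and a cleaner explanation of \emph{where} the clique hypothesis enters: it is exactly what allows $N_G(u)$ to sit inside a maximal clique $S$ not containing $u$, so that $\{u\}$ becomes one of the components of $G \setminus S$ and can be swapped for $H$ wholesale.
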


\begin{proof}
    Let $S \subseteq V \setminus \{u\}$ be any maximal clique of $G$ 
    containing $N_G(u)$. We apply the forward direction of  \cref{lem:TP_recons} on $S$ to obtain components $\{K_1, \ldots, K_r\}$ that are modules such that 
    $G[S \cup K_i]$ is trivially perfect for every $1 \leqslant i \leqslant r$ 
    and $\bigcup_{1\leq i \leq r} \{N_G(K_i)\}$ is a nested family. Note that by 
    construction and w.l.o.g., we may assume $K_1 = \{u\}$. The result then 
    directly follows from the reverse direction of \cref{lem:TP_recons} by 
    replacing $K_1$ by $H$.
\end{proof} 

\section{Reduction rules}
\label{sec:rules}

In the remainder of this section we assume that we are given an instance $(G = (V,E), k)$ of \TPE{}.

\subsection{Standard reduction rules}
\label{subsec:modules}
We first describe some well-known reduction rules~\cite{BPP10,BP13,DP18,DPT23} 
that are essential to obtain a vertex-kernel using \cref{thm:strategy}~\cite{DPT23}. 
We will assume in the remainder of this work that the instance at hand is reduced under \cref{rule:compTP,rule:borneCC-TP}, 
meaning that none of them applies to the instance. 

\begin{polyrule}
    \label{rule:compTP}
   Let $C \subseteq V$ be a connected component of $G$ such that $G[C]$ is trivially perfect.
    Remove $C$ from $G$.
\end{polyrule}

\begin{polyrule} 
\label{rule:borneCC-TP}
    Let $K \subseteq V$ be a critical clique of $G$ 
    such that $\size{K} > k+1$. Remove $\size{K} -(k+1)$ arbitrary vertices in $K$ from $G$.
\end{polyrule}

\begin{lemma}[Folklore, \cite{BPP10,DP18}]
\label{lem:simple}
    \cref{rule:compTP,rule:borneCC-TP} are safe and can be applied in
    polynomial time.
\end{lemma}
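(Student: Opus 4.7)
The plan is to verify each rule separately, establishing equivalence of \YINST{}s before and after application and noting polynomial-time applicability.

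For \cref{rule:compTP}, I rely on the elementary facts that trivially perfect graphs are closed under induced subgraphs and that a disjoint union of trivially perfect graphs is trivially perfect. Given a $k$-editing $F$ of $G$, the restriction $F \cap [V \setminus C]^2$ is a $k$-editing of $G \setminus C$, since $(G \setminus C) \triangle (F \cap [V \setminus C]^2)$ is an induced subgraph of the trivially perfect graph $G \triangle F$. Conversely, any $k$-editing of $G \setminus C$ remains valid for $G$ because $C$ is already trivially perfect and has no edges to $V \setminus C$. Polynomial applicability follows from linear-time recognition of trivially perfect graphs.

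For \cref{rule:borneCC-TP}, let $K' \subseteq K$ be the retained subset with $\size{K'} = k+1$ and $G' = G[V \setminus (K \setminus K')]$. The forward direction is analogous: restricting a $k$-editing of $G$ to $[V(G')]^2$ yields a $k$-editing of $G'$. For the backward direction, take a $k$-editing $F'$ of $G'$ and view it as an editing of $G$ that touches no vertex of $K \setminus K'$. Since $\size{K'} = k+1 > \size{F'}$, some vertex $v \in K'$ is unaffected by $F'$; because every vertex of $K \setminus K'$ is a true twin of $v$ in $G$ and $F'$ involves neither $v$ nor any vertex of $K \setminus K'$, these true twin relationships persist in $G \triangle F'$.

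The remaining step is to conclude that $G \triangle F'$ is trivially perfect, which I would do by showing that adjoining a true twin preserves trivially perfectness: since true twins are adjacent by definition, a short case analysis confirms that neither an induced $C_4$ nor an induced $P_4$ can contain two vertices with identical closed neighborhoods. Applying this iteratively, starting from the trivially perfect graph $G' \triangle F'$ and adding back the vertices of $K \setminus K'$ one by one as true twins of $v$, yields that $G \triangle F'$ is trivially perfect. The main non-routine ingredient is this true-twin-preservation argument; polynomial applicability of \cref{rule:borneCC-TP} is immediate since critical cliques can be computed in polynomial time.
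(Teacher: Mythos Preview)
Your proof is correct and constitutes exactly the standard folklore argument; the paper does not supply its own proof of \cref{lem:simple} but merely cites it as folklore from~\cite{BPP10,DP18}. The key ingredient you isolate---closure of trivially perfect graphs under true twin addition---is in fact invoked explicitly later in the paper (in the proof of \cref{claim:CMmodule}), so your reasoning is fully aligned with the paper's toolkit.
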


\subsection{An \texorpdfstring{$O(k)$}{O(k)} bound on the size of trivially perfect modules} 
\label{subsec:bound}

Using an additional reduction rule bounding the size of independent sets in any trivially perfect module by $O(k)$, 
Drange and Pilipczuk~\cite{DP18} proved that such modules can be reduced to $O(k^2)$ vertices.
We strengthen this result by proving that trivially perfect modules can further be reduced to $O(k)$ vertices. We first deal with modules that contain a large anti-matching. 

\begin{polyrule}
    \label{rule:dam} 
    Let $M \subseteq V$ be a trivially perfect module of $G$. 
    If $G[M]$ contains a $(k+1)$-sized anti-matching $D$, then remove the vertices contained in $M \setminus V(D)$. 
\end{polyrule}

\begin{lemma}
    \label{lem:moduleTP}
    \cref{rule:dam} is safe. 
\end{lemma}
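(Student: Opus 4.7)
My plan is to prove the equivalence of $(G,k)$ and $(G',k)$ where $G' = G \setminus (M \setminus V(D))$.

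The forward direction is immediate: if $F$ is a $k$-editing of $G$, then $F' = F \cap [V(G')]^2$ has size at most $k$ and $G' \triangle F' = (G \triangle F)[V(G')]$ is an induced subgraph of a trivially perfect graph, hence trivially perfect.

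For the backward direction, let $F'$ be a $k$-editing of $G'$ with $H' = G' \triangle F'$ trivially perfect. I would define $F = F' \cap [V \setminus M]^2$, keeping only the edits that do not touch $M$, and set $H = G \triangle F$. Clearly $|F| \leq |F'| \leq k$, and by construction $H[M] = G[M]$ (trivially perfect since $M$ is a trivially perfect module), $H[V \setminus M] = H'[V \setminus M]$ (trivially perfect as an induced subgraph of $H'$), and $M$ remains a module of $H$ with external neighborhood $W := N_G(M)$. Applying \cref{obs:c4dam} to $G'$ and the module $V(D)$ (which contains the $(k+1)$-anti-matching $D$) gives that $W$ is a clique in $H'$, and hence also in $H$ since $H[V \setminus M] = H'[V \setminus M]$.

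The bulk of the work is showing $H$ has no induced $P_4$ or $C_4$. I would argue by contradiction: assume such an obstruction exists on vertices $\{a,b,c,d\}$ and case-split on $I = \{a,b,c,d\} \cap M$. The extreme cases $I = \emptyset$ and $|I| = 4$ immediately contradict trivial perfectness of $H'$ and $G[M]$ respectively. For the mixed cases, every vertex of $I$ is adjacent to all of $W$ and to no vertex of $Z := V \setminus (M \cup W)$ in $H$, which forces each non-$M$ vertex of the obstruction into $W$ or $Z$ according to its adjacencies to $I$. Combined with the clique structure of $W$, a short enumeration shows that most subcases yield an immediate contradiction, because two putative non-neighbors are both forced into $W$.

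The delicate subcase is $|I| = 1$ with the unique $M$-vertex at an endpoint of a $P_4$, say $I = \{a\}$ and obstruction $a$-$b$-$c$-$d$: here $b \in W$ and $c, d \in Z$ are consistent. To close this case, I would invoke the observation that $|V(D)| = 2(k+1) > 2|F'|$, so some vertex $u \in V(D)$ is entirely unaffected by $F'$, and hence satisfies $N_{H'}(u) = N_{G'}(u)$; in particular, $u$ is adjacent to every vertex of $W$ and to no vertex of $Z$ in $H'$. Substituting $a$ by $u$ then produces an induced $P_4$ on $\{u,b,c,d\}$ in $H'$, contradicting the trivial perfectness of $H'$. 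This substitution via an unaffected vertex of the anti-matching is the main obstacle of the proof.
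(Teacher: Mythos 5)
Your proof is correct, and you arrive at essentially the same solution $F^{*}=F'\cap[V\setminus M]^2$ as the paper, but you verify its correctness by a genuinely different route. The paper picks an unaffected vertex $u\in V(D)$, observes via \cref{obs:c4dam} that $N_{H'}(u)\setminus V(D)=N_G(M)$ is a clique in $H'$, deletes $V(D)\setminus\{u\}$ from $H'$, and then invokes the blow-up machinery (\cref{prop:blowup}, which rests on the maximal-clique/nested-family characterization of \cref{lem:TP_recons}) to reattach all of $M$ as a module at $u$. You instead show trivial perfectness of $H=G\triangle F^{*}$ directly from the forbidden-subgraph characterization: a case split on $|\{a,b,c,d\}\cap M|$ kills almost every configuration because either the obstruction lies wholly in $H[M]=G[M]$ or wholly in $H[V\setminus M]=H'[V\setminus M]$, or else two non-adjacent obstruction vertices are both forced into the clique $W=N_G(M)$. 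The one surviving configuration (a $P_4$ with a unique $M$-endpoint) is eliminated by substituting an unaffected $u\in V(D)$ and exhibiting the same $P_4$ inside $H'$. In effect you re-derive, by hand and in this special case, what \cref{prop:blowup} gives in one line; the paper's approach is shorter and reusable (it recurs in \cref{lem:shaft,lem:teeth}), whereas yours is more elementary and self-contained. Both are valid. One small gap to fill in a final write-up: you should note explicitly that $V(D)$ is a module of $G'$ with $N_{G'}(V(D))=N_G(M)$ (which follows from $M$ being a module of $G$ and $V(G')\setminus V(D)=V\setminus M$), so that \cref{obs:c4dam} applies to the instance $(G',k)$ as you claim.
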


\begin{proof}
    Let $G' = (V',E')$ be the graph obtained after application of~\cref{rule:dam}. We need to prove that \INST{} is a \YINST{} if 
    and only if $(G'=(V',E'),k)$ is a \YINST. 
    The forward direction is straightforward 
    since $G'$ is an induced subgraph of $G$ and trivially perfect graphs are hereditary. 
    We now consider the reverse direction. Let $M' = V(D)$, the set of vertices kept by \cref{rule:dam}.  Moreover, let $F'$ be a $k$-editing of $G'$ and $H'=G' \triangle F'$. 
    We will construct a $k$-editing $\OPT{F}$ of $G$. 
    Note that since the pairs contained in an anti-matching are disjoint (\cref{def:am}), $\size{M'} = 2(k+1)$. 
    Moreover, since $\size{F'} \leqslant k$ there are at most $2k$ 
    affected vertices. 
    Hence let $u$ be an unaffected vertex of $M'$.  
    By \cref{obs:c4dam} and since $M'$ contains a $(k+1)$-sized anti-matching 
    we have that $N_{G'}(M')$ is a clique in $H'$. 
    The graph $H_u = H' \setminus (M' \setminus \{u\})$ is trivially perfect by heredity and $N_{H_u}(u) = N_{G'}(M')$.  
    It follows that $N_{H_u}(u)$ is a clique and \cref{prop:blowup} implies that the graph 
    $H = \BLOW{H_u}{M}{u}$ is trivially perfect. 
    Let $\OPT{F}$ be the editing such that $H = G \triangle \OPT{F}$.  Since $u$ is unaffected by $F'$ and $u \in M$ 
    we have $N_{H_u}(u) = N_G(M)$. Hence, since $M$ is 
    a module in $G$ we have that $N_{H}(v) = N_G(v)$ for every 
    vertex $v \in M$,  
    implying that $\OPT{F} \subseteq F'$. This concludes the proof. 
\end{proof}

In order to bound the size of any trivially perfect module by $O(k)$, we actually prove a more general reduction rule that will be useful for the rest of our kernelization algorithm. 
This rule operates on a more intricate structure, so-called \emph{comb}~\cite{DPT23}, that induces a trivially perfect graph but not necessarily a module. 

\begin{figure}[h]
    \centering
    \includegraphics[scale=1.1]{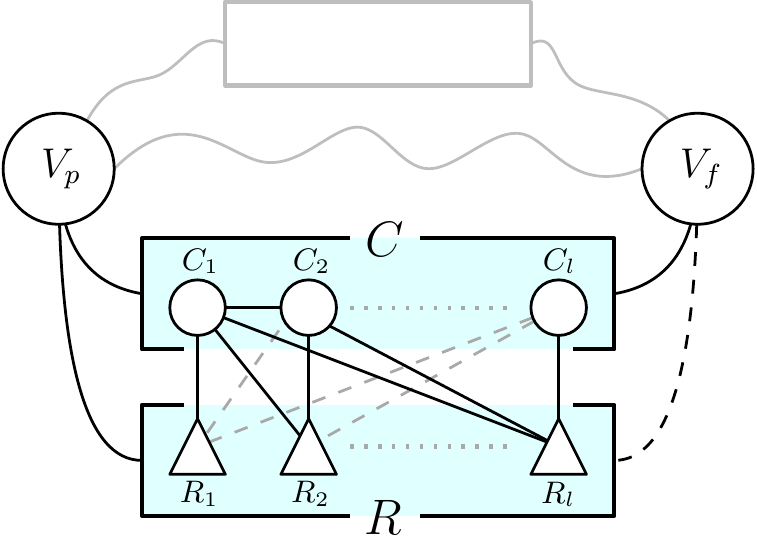}
    \caption{A comb of a graph $G = (V,E)$ with shaft $C$ and teeth $R$. Each set $C_i$ is a critical clique while each set $R_i$ induces a (possibly disconnected) trivially perfect module, $1 \leqslant i \leqslant l$. Notice that the sets $V_p$ and $V_f$ might be adjacent to some other vertices of the graph. \label{fig:comb}}
\end{figure}

\begin{definition}[Comb~\cite{DPT23}]\label{def:comb} 
A pair $(C,R)$ of disjoint subsets of $V$ is a comb of $G$ if:
\begin{itemize}
    \item $G[C]$ is a clique that can be partitioned into $l$
    critical cliques $\{C_1,...,C_l\}$
    \item $R$ can be partitioned into $l$ non-empty non-adjacent trivially perfect modules $\{R_1,...,R_l\}$
    \item $N_G(C_i) \cap R= \bigcup_{j=i}^l R_j$ and $N_G(R_i) \cap C= \bigcup_{j=1}^i C_j$ for $1 \leq i \leq l$
    \item there exist two (possibly empty) subsets of vertices $V_f,\VP \subseteq V(G)\backslash\{C\cup R\}$ such that: 
    \begin{itemize}
        \item $\forall x \in C,\ N_G(x) \backslash (C \cup R) = \VP \cup V_f$ and
        \item $\forall y\in R,\ N_G(y)\backslash (C \cup R) = \VP$. 
    \end{itemize}
\end{itemize}
\end{definition}

Given a comb $(C,R)$, $C$ is called the \emph{shaft} of the comb and $R$ the \emph{teeth} of the comb.  
See \cref{fig:comb} for an illustration of \cref{def:comb}. 
Recall that we assume that the graph is reduced under \cref{rule:borneCC-TP}, 
which means that $\size{C_i} \leqslant k+1$ for $1 \leqslant i \leqslant l$. 
\DPT{} showed the following proposition on the structure of combs. 

\begin{proposition}[\cite{DPT23}]\label{prop:unique}
Given a comb $(C,R)$ of $G$, the subgraph $G[C \cup R]$ is trivially perfect. Moreover the sets $V_p$ and $V_f$, and the ordered partitions $(C_1,\dots, C_l)$ of $C$ and $(R_1,\dots,R_l)$ of $R$ are uniquely determined.
\end{proposition}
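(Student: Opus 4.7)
The plan is to handle the two assertions of the proposition in turn. For the trivial perfectness of $G[C \cup R]$, I would proceed by induction on the number of teeth $l$. The base case $l = 0$ yields an empty graph, which is vacuously trivially perfect. For the inductive step, the key observation is that $C_1$ is a universal clique of $G[C \cup R]$: every vertex of $C_1$ is adjacent to the rest of $C$ (which is a clique) and to all of $R$ since by \cref{def:comb} we have $N_G(C_1) \cap R = \bigcup_{j=1}^{l} R_j = R$. Removing $C_1$, the graph decomposes as a disjoint union of $G[R_1]$ and $G[(C \setminus C_1) \cup (R \setminus R_1)]$, the two parts being non-adjacent because $N_G(R_1) \cap C = C_1$ and the $R_j$'s are pairwise non-adjacent. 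The pair $(C \setminus C_1, R \setminus R_1)$, after reindexing, is itself a comb with $l-1$ teeth; by induction $G[(C \setminus C_1) \cup (R \setminus R_1)]$ is trivially perfect, and $G[R_1]$ is trivially perfect by assumption. A disjoint union of trivially perfect graphs together with a universal clique added on top remains trivially perfect (a $P_4$ or $C_4$ cannot contain a universal vertex and cannot span two components), which closes the induction.

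For the uniqueness of $V_p$, $V_f$ and the two ordered partitions, each piece can be recovered from $(G, C, R)$ alone. The set $V_p$ is forced as $N_G(y) \setminus (C \cup R)$ for any $y \in R$, since \cref{def:comb} requires this set to be the same for every vertex of $R$. The set $V_f$ is then forced as $(N_G(x) \setminus (C \cup R)) \setminus V_p$ for any $x \in C$. For the partition of $C$, any two vertices within a common $C_i$ share the closed neighborhood $C \cup \bigcup_{j=i}^{l} R_j \cup V_p \cup V_f$ and are therefore true twins in $G$, whereas two vertices lying in distinct parts $C_i$ and $C_{i'}$ with $i < i'$ are not true twins because $\bigcup_{j=i}^{l} R_j \supsetneq \bigcup_{j=i'}^{l} R_j$ strictly, every $R_m$ being non-empty by definition. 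Hence $\{C_1,\ldots,C_l\}$ coincides with the partition of $C$ into critical cliques of $G$, which is intrinsic to $G$; and the linear order is forced by the strict decrease of $|N_G(C_i) \cap R|$ in $i$. Once the ordered $C_i$'s are fixed, each $R_i$ is forced as $\{y \in R : N_G(y) \cap C = \bigcup_{j=1}^{i} C_j\}$.

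The main subtle point, in my view, is that one cannot apply \cref{lem:TP_recons} directly with $S = C$: since any vertex of $R_l$ is adjacent to all of $C$ (as $N_G(R_l) \cap C = \bigcup_{j=1}^{l} C_j = C$), the set $C$ is generally not a maximal clique of $G[C \cup R]$. The induction through the universal clique $C_1$ circumvents this cleanly and in effect reconstructs a UCD of $G[C \cup R]$ layer by layer. For the uniqueness part, the non-emptiness of every $R_m$ built into \cref{def:comb} is precisely what makes the chain of $R$-neighborhoods strictly nested, and hence what forces the ordering of the $C_i$'s.
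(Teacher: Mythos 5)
The paper does not supply a proof of this proposition: it is imported verbatim from \cite{DPT23} (the citation tag on the statement), so there is no in-paper argument to compare yours against. Taken on its own, your reconstruction is correct and is the natural one. For the first assertion you observe that $C_1$ consists of universal vertices of $G[C\cup R]$ (since $G[C]$ is a clique and $N_G(C_1)\cap R = R$), peel it off, and note that the remainder splits into the non-adjacent pieces $G[R_1]$ and $G[(C\setminus C_1)\cup(R\setminus R_1)]$; the latter is again covered by a comb with $l-1$ teeth once $C_1$ is folded into $V_p$, and both disjoint union and the addition of universal vertices preserve trivial perfectness. This is exactly the ``build the UCD layer by layer'' construction that the paper alludes to informally a paragraph later. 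The one step you glide over is checking that $(C\setminus C_1,\,R\setminus R_1)$ really satisfies all comb conditions with the new $V_p' = V_p\cup C_1$ and $V_f' = V_f$; this does hold, but it is worth saying explicitly, since otherwise the induction hypothesis does not formally apply. Your remark about why $S=C$ cannot be fed to \cref{lem:TP_recons} directly is accurate and a useful warning.

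For the uniqueness part your recovery of $V_p$, $V_f$, the partition, the ordering, and the $R_i$'s from $(G,C,R)$ is correct. One small point of precision: two vertices of $C_i$ do share the closed neighbourhood $C\cup\bigcup_{j\geq i}R_j\cup V_p\cup V_f$, and vertices in distinct $C_i,C_{i'}$ do not (non-emptiness of every $R_m$ is used exactly here, as you say), so the partition of $C$ into twin-equivalence classes is forced. Whether each such class is a \emph{maximal} set of true twins in $G$, i.e.\ literally a critical clique of $G$, is a requirement of \cref{def:comb} rather than a consequence of your computation; your phrasing ``coincides with the partition of $C$ into critical cliques of $G$'' implicitly leans on that defining constraint. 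This is not an error, but stating it as ``the twin relation of $G$ restricted to $C$ has classes $\{C_1,\dots,C_l\}$, and by definition these are critical cliques'' would make the logical flow cleaner. The forced ordering via the strictly decreasing chain $\lvert N_G(C_i)\cap R\rvert$ and the reconstruction $R_i=\{y\in R: N_G(y)\cap C=\bigcup_{j\leq i}C_j\}$ are both correct.
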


In the following we assume that any comb $(C,R)$ is given with the 
ordered partitions $(C_1,\dots, C_l)$ of $C$ and $(R_1,\dots,R_l)$ of $R$. 
We note here that \cref{def:comb} slightly differs from the one given in~\cite{DPT23} where the set 
$V_f$ was required to be non-empty for technical reasons. Dropping this constraint will ease the presentation of our reduction rules. 

\smallskip

We now give several observations that will help understand \cref{def:comb}, in particular its relation to trivially perfect modules.  Given a trivially perfect graph $G = (V,E)$ and its UCD $\mathcal{T}_G = (T , \mathcal{B})$, one can construct a comb $(C,R)$ of $G$ by simply taking a path $P$ from a node $v_1$ of $T$ to one of its descendent $v_l$. The shaft $C$ are the vertices in bags of this path, the teeth $R$ are the bags of subtrees rooted in the children (not on $P$) of any node on the path $P$. We can observe that in this case, $V_p$ corresponds to vertices in the bags on the path from the parent of $v_1$ to the root of $T$ and that $V_f$ is empty. 

In particular, the vertex set of any connected trivially perfect graph can be partitioned into a comb $(C,R)$ by taking a path from the root of its UCD to one of its leaves. 
This means that when $V_p = V_f = \emptyset$, \cref{def:comb} corresponds to a 
connected trivially perfect graph. Similarly, if only the set $V_f$ is empty then 
\cref{def:comb} corresponds to a \emph{connected trivially perfect module} 
since for every $u \in C \cup R$ it holds that $N_G(u) \setminus (C \cup R) = V_p$. 

\smallskip

The following directly comes from the definition of a comb and is verified 
whether sets $V_p$ and $V_f$ are empty or not. 

\begin{observation}
    \label{obs:combmodules}
    The set of vertices $C$ (resp. $R$) is a module of $G \setminus R$ (resp. $G \setminus C$). 
\end{observation}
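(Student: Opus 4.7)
The plan is to derive both statements directly from the uniform external neighborhoods guaranteed by \cref{def:comb}. Recall that a set $S$ is a module of a graph $G'$ precisely when every vertex of $S$ has the same neighborhood in $V(G') \setminus S$. So for each of the two claims it suffices to exhibit a common neighborhood, outside the relevant set, that is shared by all vertices in $C$ (resp.\ all vertices in $R$).

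First I would handle $C$ as a module of $G \setminus R$. Consider any $x \in C$. The neighborhood of $x$ in $G \setminus R$ that lies outside $C$ is exactly $N_G(x) \setminus (C \cup R)$, and by the last item of \cref{def:comb} this set equals $\VP \cup \VF$, independently of the choice of $x \in C$. Hence all vertices of $C$ share the same neighborhood in $(V \setminus R) \setminus C$, which is the definition of $C$ being a module of $G \setminus R$.

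The argument for $R$ as a module of $G \setminus C$ is entirely symmetric: for any $y \in R$, its neighborhood in $G \setminus C$ outside $R$ is $N_G(y) \setminus (C \cup R) = \VP$, again by the last item of \cref{def:comb}, and this does not depend on $y$. Therefore $R$ is a module of $G \setminus C$.

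There is no real obstacle here; the observation is essentially a direct reading of the defining property that every vertex of $C$ (resp.\ $R$) has the same external neighborhood $\VP \cup \VF$ (resp.\ $\VP$) outside of $C \cup R$. The only point to emphasize in the write-up is that the claim holds regardless of whether $\VP$ or $\VF$ is empty, so the statement is vacuously correct in the degenerate cases as well.
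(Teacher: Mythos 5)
Your proof is correct and follows the same reasoning the paper implicitly relies on: the observation is treated as an immediate consequence of the last item of \cref{def:comb}, which fixes the external neighborhoods of $C$ and $R$ as $\VP \cup \VF$ and $\VP$, respectively. The paper gives no separate argument beyond noting that this holds whether or not $\VP$ and $\VF$ are empty, which you also point out.
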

 
\smallskip

We will show that combs can be safely reduced to $O(k)$ vertices. We first focus on 
combs having a large shaft, which will allow us to 
reduce trivially perfect modules with small anti-matching to $O(k)$ vertices (\cref{lem:reducedmodule}).
Then we turn our attention to combs with many vertices in the teeth to bound the size of every comb to $O(k)$ vertices (\cref{lem:combs}). 

\paragraph*{Combs with large shafts}

\DPT{} showed that the length of a comb (\ie the number $l$ of different critical cliques in the shaft) can be reduced linearly in $k$. 
However, as critical cliques contain $O(k)$ vertices by \cref{rule:borneCC-TP}, it only allowed the authors to bound the number of vertices in shafts of combs to $O(k^2)$. 
\cref{rule:shaft} presented in this section keeps two sets $\mathcal{C}_a$ and $\mathcal{C}_b$ containing a linear number (in $k$) of vertices at the beginning and at the end of the shaft, allowing to bound its size linearly in $k$. 
The two sets $\mathcal{C}_a$ and $\mathcal{C}_b$ will be large enough to ensure the existence of two vertices that will be unaffected by a given $k$-editing of the graph.
We will use such vertices to prove that there exists a $k$-editing of the graph that does not affect any vertex in the shaft lying between $\mathcal{C}_a$ and $\mathcal{C}_b$, implying the safeness of the rule. 

\begin{polyrule}
    \label{rule:shaft} 
    Let $(C,R)$ be a \WC{} of $G$ such that there exist \emph{disjoint} \SPP{2k+1}s 
    $\mathcal{C}_a$ of $\{C_1, \ldots, C_l\}$ and $\mathcal{C}_b$ of $\{C_l, C_{l-1}, \ldots, C_1\}$.  
    Remove $C' = C \setminus (\mathcal{C}_a \cup \mathcal{C}_b)$ from $G$.  
\end{polyrule}

\begin{lemma}
    \label{lem:shaft}
    \cref{rule:shaft} is safe. 
\end{lemma}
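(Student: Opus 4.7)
The forward direction is immediate from heredity of trivially perfect graphs, since $G' = G \setminus C'$ is an induced subgraph of $G$. For the reverse direction, let $F'$ be a $k$-editing of $G'$ with $H' := G' \triangle F'$ trivially perfect, and set $H := G \triangle F'$; the plan is to show $H$ is trivially perfect, so that $F'$ is itself a valid $k$-editing of $G$. Two preparatory facts guide the argument. First, since $F' \subseteq [V(G')]^2$ no pair of $F'$ meets $C'$, hence $N_H(v) = N_G(v)$ for every $v \in C'$ and $H[C']$ remains a clique. Second, since $|\mathcal{C}_a|, |\mathcal{C}_b| \geq 2k+1$ while $F'$ affects at most $2k$ vertices, by pigeonhole there exist unaffected $u_a \in \mathcal{C}_a$ and $u_b \in \mathcal{C}_b$; write $u_a \in C_{i_a}$ and $u_b \in C_{i_b}$, with $i_a \leq p_a < p_b \leq i_b$. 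Then $N_H(u_a) = N_G(u_a)$ and $N_H(u_b) = N_G(u_b)$, and the comb's nested teeth give $N_H[u_b] \subseteq N_H[v] \subseteq N_H[u_a]$ for every $v \in C'$.

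Assume for contradiction that $H$ contains an induced $C_4$ or $P_4$, call it $X$. Since $H[C']$ is a clique and $C_4, P_4$ are triangle-free, $|X \cap C'| \leq 2$, and $|X \cap C'| = 0$ would reproduce $X$ as induced in $H' = H \setminus C'$, contradicting its trivial perfection. The case $|X \cap C'| = 2$ reduces to a brief enumeration: if $a \in C_{i_a'}, b \in C_{i_b'}$ lie in $X$ with $i_a' \leq i_b'$, then $N_G(b) \setminus (N_G(a) \cup \{a, b\}) = \emptyset$ by nestedness, which rules out every possible placement of $\{a, b\}$ in a $C_4$ or $P_4$. The crux is the case $|X \cap C'| = 1$, where $a \in X \cap C'$ belongs to some $C_i$ with $p_a < i < p_b$ and the remaining three vertices $b, c, d$ of $X$ lie in $V(G')$.

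The strategy for this case is to substitute $a$ by $u_a$ or $u_b$ in $X$ and exhibit an induced forbidden pattern of $H'$, contradicting its trivial perfection. The substitution by $u_a$ preserves the edges and non-edges of $X$ at every $x \in \{b, c, d\}$ except when $x \in \bigcup_{i_a \leq j < i} R_j$, and symmetrically substituting by $u_b$ fails only at $x \in \bigcup_{i \leq j < i_b} R_j$. In configurations where none of $b, c, d$ lies in such intermediate teeth, one of the two substitutions works directly. The main obstacle, and the bulk of the technical work, will be sub-cases where several vertices of $X \setminus \{a\}$ do lie in intermediate teeth: there I would use that any edge of $X$ between two distinct teeth is a non-edge of $G$ and hence forced into $F'$, and combine this with the choice of a suitable alternative unaffected vertex $w \in C''$ whose critical-clique index brackets the relevant tooth indices, so as to exhibit the needed forbidden pattern in $H'$ (typically a $P_4$ with vertex sequence $u_b, d, c, b$ or a $C_4$ on $\{w, b, c, d\}$). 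The $(2k+1)$-packing sizes of $\mathcal{C}_a$ and $\mathcal{C}_b$ are essential here, as they guarantee enough unaffected vertices in $C''$ to cover every configuration.
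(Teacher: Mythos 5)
The proposal has a fatal flaw at its very first step: it sets $H := G \triangle F'$ and aims to show $H$ is trivially perfect, i.e.\ that \emph{the same} editing $F'$ works for $G$. This is false in general. The reason the paper goes to the trouble of constructing a \emph{modified} editing $\OPT{F}$ (\cref{claim:RMmodule}) is precisely that the teeth $\RM$ between the two packings form a single module in $G'$ but not in $G$; an editing $F'$ may exploit this collapsed structure of $G'$ in a way that is irreparably inconsistent with the reintroduction of $C'$.

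Here is a concrete counterexample. Take $k=1$ and a comb $(C,R)$ with $l=5$, where $C_1$ and $C_5$ each contain three true twins while $C_2=\{c_2\}$, $C_3=\{c_3\}$, $C_4=\{c_4\}$, and each tooth $R_i=\{r_i\}$ is a singleton, with $V_p=V_f=\emptyset$. Then $\mathcal{C}_a=\{C_1\}$ and $\mathcal{C}_b=\{C_5\}$ are disjoint $(2k+1)$-packings and $C'=\{c_2,c_3,c_4\}$. Let $F'=\{\{r_2,r_3\}\}$ (an edge addition). In $G'$ all of $r_1,\dots,r_4$ have identical neighborhood $C_1$, so $H'=G'\triangle F'$ is trivially perfect (its UCD has root $C_1$ and children $\{r_1\},\{r_2,r_3\},\{r_4\},C_5\cup\{r_5\}$). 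But in $H=G\triangle F'$ the set $\{r_2,r_3,c_3,c_4\}$ induces a $P_4$: the only edges are $r_2r_3$ (added), $r_3c_3$ and $c_3c_4$. Thus $F'$ is a $1$-editing of $G'$ but not of $G$.

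This example also exposes a hole in your $|X\cap C'|=2$ case analysis: the condition $N_G(b)\setminus(N_G(a)\cup\{a,b\})=\emptyset$ (with $a\in C_{i_a'}$, $b\in C_{i_b'}$, $i_a'\leq i_b'$) does \emph{not} rule out the placement where $b$ is an endpoint of a $P_4$ and $a$ is the inner vertex adjacent to it — exactly the configuration $r_2-r_3-c_3-c_4$ above. And the $|X\cap C'|=1$ case, which you flag as the technical crux, is left unresolved: when $X$ contains one vertex in a tooth $R_j$ with $i_a\leq j<i$ and another in $R_{j'}$ with $i\leq j'<i_b$, neither substitution by $u_a$ nor by $u_b$ preserves the pattern, and there is no guarantee that some further unaffected vertex in the surviving shaft has an index that ``brackets'' both tooth indices.

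The paper takes a constructive route instead. It first modifies $F'$ to a $k$-editing $\OPT{F}$ of $\GM = G\setminus\CM$ under which $\RM$ is a module and $\OPT{H}[\RM]=G[\RM]$: it picks the cheapest vertex $\VM\in\RM$, deletes the rest of $\RM$, and blows $\VM$ up by $G[\RM]$ (\cref{prop:blowup}), which never increases the size of the editing. It then verifies, via \cref{lem:TP_recons} applied to a maximal clique $S=\SM\cup\CM$ containing $c_a,c_b,\CM$, that adding back $C'$ preserves the nested-family condition on the components around $S$, because the neighborhoods of the components adjacent to $c_a$ but not $c_b$ (which are exactly $\RM$) become uniformly nested after the modification, while the remaining components keep the nesting they had in $\HM$. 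The unaffected vertices $c_a$ and $c_b$ are used not to substitute inside forbidden subgraphs, but to pin down the maximal clique and classify the components around it. Your proposal would need, at minimum, to incorporate something like \cref{claim:RMmodule} before reasoning about obstructions.
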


\begin{proof} 
    Let $G' = G \setminus C'$ be the graph obtained after application of~\cref{rule:shaft}. 
    Since $G'$ is an 
    induced subgraph of $G$ and since trivially perfect graphs are 
    hereditary, any $k$-editing of $G$ is a $k$-editing of $G'$. 
    
    For the reverse direction, let $F'$ be a $k$-editing of $G'$ and 
    $H'=G' \triangle F'$. We will construct a $k$-editing $\OPT{F}$ of $G$.  
    Let $c_a$ and $c_b$ be unaffected vertices 
    in $\mathcal{C}_a$ and $\mathcal{C}_b$, respectively. 
    Note that both sets contain at least $2k+1$ vertices and that 
    $F'$ affects at most $2k$ vertices, hence $c_a$ 
    and $c_b$ are well-defined. Let $C_a$ and $C_b$ be the 
    critical cliques of $C$ containing $c_a$ and $c_b$, $1 \leqslant a < b \leqslant l$. Moreover, let 
    $\CM = C_{a+1} \cup \ldots \cup C_{b-1}$ and   
    $\RM = R_a \cup \ldots \cup R_{b-1}$.  
    Similarly, let 
    $\CL = C_1 \cup \ldots \cup C_{a}$,  
    $\CU = C_b \cup \ldots \cup C_l$ and 
    $\RU = R_b \cup \ldots \cup R_l$
    These sets are depicted Figure~\ref{fig:shaft}. 
    Finally, let $\GM = G \setminus \CM$ and $\HM = H'\setminus \CM$. Notice in particular that $\HM$ is trivially perfect and that $C' \subseteq \CM$. 

    \begin{figure}[ht]
        \centering
        \includegraphics[scale=1.1]{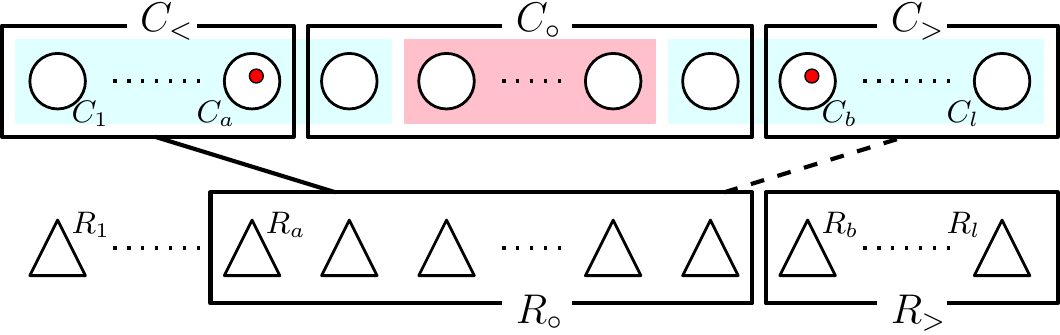}
        \caption{
        Illustration of the comb and the sets used in the proof of \cref{lem:shaft}. The circles are critical cliques of the shaft and the triangles are teeth. The red vertices correspond to $c_a$ and $c_b$, the light blue rectangles correspond to sets $\mathcal{C}_a$ and $\mathcal{C}_b$ and the light red rectangle corresponds to $C'$, which is removed by \cref{rule:shaft}. \label{fig:shaft}}
    \end{figure}
    
    Let $\FM \subseteq F'$ be the $k$-editing such that $\HM = \GM \triangle \FM$ and $\SM$ be a maximal clique of $\HM$ containing $\{c_a,c_b\}$. 
    Notice that since $c_a$ and $c_b$ are unaffected, 
    $\SM$ is included in $N_{\GM}(\{c_a,c_b\}) = C \cup V_p \cup V_f \cup \RU$. 
    We use \cref{lem:TP_recons} on $\SM$ 
    to obtain a set of connected components $\{K_1, \ldots, K_r\}$ 
    of $\HM\setminus \SM$ such that $\{K_1, \ldots, K_r\}$ are modules in 
    $\HM$ whose (possibly empty) neighborhoods in $\SM$ form a nested family. 
    We first modify $\FM$ to 
    obtain a $k$-editing of $\GM$ where vertices of $\RM$ are affected uniformly.  

    \begin{claim}
        \label{claim:RMmodule}
        There exists a $k$-editing $\OPT{F}$ of $\GM$ such that, 
        in $\OPT{H} = \GM \triangle \OPT{F}$, $\RM$ is a 
        module and $\OPT{H}[\RM] = G[\RM]$.
    \end{claim}

    \begin{proofclaim}
        We begin with several useful observations. 
        First, $\RM$ is a module in $\GM$ since $R \supset \RM$  
        is a module in $G \setminus C$ (\cref{obs:combmodules}) 
        and vertices of $\RM$ are adjacent to $\CL$ and non adjacent 
        to $\CU$. 
        Next, since any component $K_i$ is a module in $\HM$, $1 \leqslant i \leqslant r$, 
        and since $c_a$ and $c_b$ are unaffected by $\FM$, we have 
        $N_{\HM}(K_i) \cap \{c_a,c_b\} = N_{\GM}(K_i) \cap \{c_a,c_b\}$. 
        In other words, vertices in a same component $K_i$ must have the same adjacency  
        with $\{c_a,c_b\}$ in $\GM$ and in $\HM$. 
        Similarly, no vertex $v \in \RM$ belongs to $\SM$ since 
        $N_{\GM}(v) \cap \{c_b\} = \emptyset$. 
        Moreover, the only vertices of $\GM$ that are adjacent to 
        $c_a$ but not $c_b$ are exactly those of $\RM$. Hence for any 
        vertex $\VM \in \RM$ it holds that $N_{\HM}(\VM) \subseteq \SM \cup \RM$. 

        \smallskip

        Assume now that $\RM$ is not a module in $\HM$ and let 
        $\VM \in \RM$ be a vertex contained in the least number of pairs 
        of $\FM$ with the other element in $\SM$. Consider the graph 
        $\tilde{H} = \HM \setminus (\RM \setminus \{\VM\})$, which is trivially perfect by heredity. Since $N_{\HM}(\VM) \subseteq \SM \cup \RM$, it follows that $N_{\tilde{H}}(\VM) \subseteq \SM$ is a clique. Hence \cref{prop:blowup} implies that 
        the graph $\OPT{H} =\BLOW{\tilde{H}}{G[\RM]}{\VM}$ is trivially perfect. 
        Let $\OPT{F}$ be the editing such that $\OPT{H} = \GM \triangle \OPT{F}$. 
        By the choice of $\VM$ 
        we have $\size{\OPT{F}} \leqslant \size{\FM}$. 
        It follows that $\OPT{F}$ is a desired $k$-editing, concluding the proof of \cref{claim:RMmodule}. 
    \end{proofclaim}

    We henceforth consider $\OPT{H} = \GM \triangle \OPT{F}$ 
    where $\OPT{F}$ is the $k$-editing from  \cref{claim:RMmodule}. 
    Note that the components around $\SM$ may be different in $\HM \setminus \SM$ and $\OPT{H} \setminus \SM$. 
    In a slight abuse of notation, we still define these components by 
    $\{K_1, \ldots, K_r\}$. Recall that $\{K_1, \ldots, K_r\}$ are modules in 
    $\OPT{H}$ whose (possibly empty) neighborhoods in $\SM$ form a nested family. 

    \begin{claim}
        \label{claim:tp}
        The graph $H = G \triangle \OPT{F}$ is trivially perfect.
    \end{claim}

    \begin{proofclaim}
        The graph $H$ corresponds to $\OPT{H}$ where vertices of 
        $\CM$ have been added with the same neighborhood as in $G$. Let us first observe 
        that $S = \SM \cup \CM$ is a maximal clique in $H$. 
        Indeed, $\CM$ is a clique by definition and $\SM \subseteq \big ( C \cup V_p \cup V_f \cup \RU \big ) \subseteq N_{H}(\CM) = N_G(\CM)$
        (recall that $C$ is adjacent to $V_p \cup V_f$ by \cref{def:comb} and 
        that vertices of $\CM$ are adjacent to every vertex of $\RU$). 
        Hence components $\{K_1, \ldots, K_r\}$ defined in $\OPT{H} \setminus \SM$ are the same in $H \setminus S$ and 
        their neighborhoods are nested in $\SM$. 
        We split $\{K_1, \ldots, K_r\}$ into three 
        types components w.r.t their adjacencies with  $\{c_a,c_b\}$, namely: 
    
        \begin{enumerate}
            \item $\alpha$-components that are non-adjacent to both $c_a$ and $c_b$
            \item $\beta$-components that are adjacent to $c_a$ but not $c_b$
            \item $\delta$-components that are adjacent to both $c_a$ and $c_b$
        \end{enumerate}

        In what follows we let $K_\alpha$, $K_\beta$ and 
        $K_\delta$ denote any $\alpha$-, $\beta$- and $\delta$-component, respectively. Note that 
        $N_{\OPT{H}}(K_\alpha) \subseteq N_{\OPT{H}}(K_\beta) \subseteq N_{\OPT{H}}(K_\delta) \subseteq \SM$ holds by construction. 
        Recall that since $c_a$ and $c_b$ are unaffected by $\OPT{F}$,
        $N_{G}(K_i) \cap \{c_a,c_b\} = N_{H}(K_i) \cap \{c_a,c_b\}$ for any $K_i$.  
        We claim that $\{N_{H}(K_i) \mid 1 \leqslant i \leqslant r\}$ is a nested family. Note that \cref{lem:TP_recons} will imply the result since 
        $S$ is a maximal clique in $H$. 
        To sustain this claim, recall that the neighborhoods of vertices of $\CM$ are identical in $G$ and $H$.
        Moreover, $N_H[c_b] \subseteq N_H[\CM]\subseteq N_H[c_a]$ holds as these vertices are unaffected by $\OPT{F}$. It follows that $\alpha$-components (resp. $\delta$-components) are non-adjacent (resp. adjacent) to every vertex of $\CM$ in $H$. 
        This means in particular that the neighborhoods of both $\alpha$- and $\delta$-components are nested in $S$.
        Moreover we can observe that vertices of $\beta$-components are exactly the ones of $\RM$ since they are the only ones that are adjacent to $c_a$ but not $c_b$ in $G$.  
        Hence, in $H$, we still have:
        \[
            N_{H}(K_\alpha) \subseteq N_{H}(K_\beta) \subseteq N_{H}(K_\delta) 
        \]
        
        It remains to prove that the neighborhoods of  $\beta$-components are nested in $S$.
        Let w.l.o.g. $\{K_1, \ldots, K_p\}$, $1 \leqslant p \leqslant r$ 
        be the $\beta$-components. 
        By definition of a comb, the $\beta$-components (which are also $\RM$) can be ordered w.r.t. the inclusion of their neighborhood in $G[\CM]$. We can assume w.l.o.g. that the ordering is $N_{G[\CM]}(K_1) \subseteq \ldots \subseteq N_{G[\CM]}(K_p)$. 
        Moreover we can observe that for any $\beta$-component $K_i$ we have $N_{G[\CM]}(K_i) = N_{H[\CM]}(K_i)$, $1 \leqslant i \leqslant p$. 
        Since $\RM$ is a module in $\OPT{H}$ by \cref{claim:RMmodule}  and since vertices of $\beta$-components 
        are exactly those of $\RM$,
        it follows that the neighborhoods of $\beta$-components are nested. 
        Hence $\{N_{H}(K_i) \mid 1 \leqslant i \leqslant r\}$ is a nested family and $H$ is a trivially perfect graph by \cref{lem:TP_recons}. 
    \end{proofclaim}
    
    By ~\cref{claim:tp} the graph $H = G \triangle \OPT{F}$ is trivially perfect and as $|\OPT{F}| \leq k$, it follows that $\OPT{F}$ is a $k$-editing of $G$, concluding the proof of \cref{lem:shaft}.
\end{proof}

\begin{observation}
    \label{obs:shaft}
    Assume that the instance $(G,k)$ is reduced under 
    \cref{rule:borneCC-TP,rule:shaft}. For any comb 
    $(C,R)$ of $G$ it holds that $\size{C} \leqslant 6k+2$. 
\end{observation}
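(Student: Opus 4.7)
The plan is to argue that the two natural packings built from the two ends of the shaft must cover all of $C$ when the instance is reduced under \cref{rule:shaft}, and then bound the total size of these packings by \cref{obs:packing}.

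First, I would dispose of the trivial case: if $\size{C} < 2k+1$ then the bound $\size{C} \leqslant 6k+2$ is immediate. Otherwise, both prefix and suffix \SPP{2k+1}s exist. Write $\mathcal{C}_a = \{C_1, \ldots, C_p\}$ for the unique \SPP{2k+1} of $(C_1, \ldots, C_l)$ and $\mathcal{C}_b = \{C_q, \ldots, C_l\}$ for the unique \SPP{2k+1} of the reversed ordering. Because the instance is reduced under \cref{rule:borneCC-TP}, every critical clique of $G$, and in particular every $C_i$, has size at most $k+1$. Hence \cref{obs:packing} applied with $r = 2k+1$ and $c = k+1$ yields $\size{\mathcal{C}_a} \leqslant 3k+1$ and $\size{\mathcal{C}_b} \leqslant 3k+1$.

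Next, I would exploit the fact that \cref{rule:shaft} does not apply: the packings $\mathcal{C}_a$ and $\mathcal{C}_b$ cannot be disjoint, which translates to $p \geqslant q$ with the indexing above. But then $\{C_1, \ldots, C_p\} \cup \{C_q, \ldots, C_l\} = \{C_1, \ldots, C_l\}$, so $\mathcal{C}_a \cup \mathcal{C}_b = C$ (as sets of vertices). Combining with the size bounds from the previous step,
\[
\size{C} \;=\; \size{\mathcal{C}_a \cup \mathcal{C}_b} \;\leqslant\; \size{\mathcal{C}_a} + \size{\mathcal{C}_b} \;\leqslant\; (3k+1) + (3k+1) \;=\; 6k+2,
\]
as desired.

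There is no real obstacle here; the only mildly subtle point is the observation that non-disjointness of a prefix packing and a suffix packing along the same linear order forces their union to equal the whole ordered collection, which is where the factor 2 in the bound ultimately comes from.
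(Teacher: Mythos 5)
Your proof is correct and follows essentially the same approach as the paper's: bound each packing by $3k+1$ via \cref{rule:borneCC-TP} and \cref{obs:packing}, then use the fact that \cref{rule:shaft} does not apply to force the prefix and suffix packings to overlap, hence cover $C$. The paper states this in contrapositive form (if $\size{C}>6k+2$ one could find disjoint packings), while you argue directly and spell out the index observation $p\geqslant q$; the content is identical.
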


\begin{proof}
    Since $G$ 
    is reduced under \cref{rule:borneCC-TP} every critical clique 
    $C_i$ of the shaft contains at most $k+1$ vertices, $1 \leqslant i \leqslant l$. 
    By \cref{obs:packing}, any \SPP{2k+1} of $\{C_1, \ldots, C_l\}$ (resp. $\{C_l, \ldots, C_1\}$) contains at most 
    $3k+1$ vertices. It follows that $\size{C} \leqslant 6k+2$ since 
    otherwise one could find two \emph{disjoint} \SPP{2k+1}s of $\{C_1, \ldots, C_l\}$ and of $\{C_l,  \ldots, C_1\}$ 
    and \cref{rule:shaft} would apply. 
\end{proof}

We are now ready to show how to reduce 
the size of any trivially perfect module. We need a combinatorial result 
that will be useful to obtain the claimed bound. 

\begin{lemma}
    \label{lem:nodam} 
    Let $G = (V,E)$ be a connected trivially perfect graph 
    and $\alpha$ be the size 
    of a maximum anti-matching of $G$. 
    There exists a \WC{} $(C,R)$ of $G$ such that $V = C\cup R $ and  
    $\size{R} \leqslant 4 \alpha$. 
    Moreover, such a comb can be computed in polynomial time. 
\end{lemma}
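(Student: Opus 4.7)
I plan to construct the comb from the universal clique decomposition (UCD) of $G$, which exists by \cref{def:ucd} and is computable in polynomial time. Starting at the root of the UCD $\mathcal{T}=(T,\mathcal{B})$, I greedily choose a root-to-leaf path $v_1,v_2,\ldots,v_l$ by descending at each step into a child of $v_i$ whose rooted subtree is largest. The comb $(C,R)$ is then defined by setting $C_i = B_{v_i}$ and letting $R_i$ be the union of all bags in subtrees rooted at children of $v_i$ distinct from $v_{i+1}$. The shaft is a clique because the $v_i$'s lie on a single chain of $\mathcal{T}$, each $R_i$ is a trivially perfect module whose outside neighborhood within the shaft equals $C_1 \cup \ldots \cup C_i$, and the remaining conditions of \cref{def:comb} together with $V = C \cup R$ follow directly from the properties of the UCD.

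For the bound $\size{R} \leq 4\alpha$, I would argue by induction on $\size{V}$. Let $K = B_{v_1}$ denote the universal clique of $G$ and let $G_1,\ldots,G_t$ be the connected components of $G \setminus K$, ordered so that $\size{V(G_1)} \geq \size{V(G_j)}$ for every $j$. Greediness ensures that the path enters $G_1$ at level $2$, so the induction applied to $G_1$ provides a sub-comb with teeth $R^{(1)}$ satisfying $\size{R^{(1)}} \leq 4\alpha(G_1)$. Prepending $K$ to the shaft and attaching $V(G_2)\cup\ldots\cup V(G_t)$ as the topmost teeth module yields our comb for $G$ with $\size{R} = \size{R^{(1)}} + N$, where $N = \sum_{j\geq 2}\size{V(G_j)}$; the induction therefore closes once I establish $\alpha(G) \geq \alpha(G_1) + N/4$.

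To exhibit such a large anti-matching, I would combine three disjoint contributions: an internal anti-matching of size $\alpha(G_1)$ inside $G_1$ (using $2\alpha(G_1)$ of its vertices); a cross-component matching pairing some of the remaining $\size{V(G_1)} - 2\alpha(G_1)$ $G_1$-vertices with vertices of $V(G_2)\cup\ldots\cup V(G_t)$, which are pairwise non-adjacent to them in $G$ since they lie in distinct components of $G\setminus K$; and a multipartite matching among the leftover external vertices. A case analysis on whether the $G_1$-spare capacity $\size{V(G_1)} - 2\alpha(G_1)$ is enough to absorb all $N$ externals yields the required bound: if so, we obtain $\alpha(G) \geq \alpha(G_1) + N$ directly; otherwise the greedy inequality $\size{V(G_j)} \leq \size{V(G_1)}$ together with the structural bound $\alpha(G_1) \leq (\size{V(G_1)}-1)/2$ (which holds for any connected trivially perfect graph, as the universal clique is excluded from all non-edges) is enough to guarantee $N/4$ additional pairs via the multipartite matching.

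The main obstacle will be the subcase in which a single external component $G_j$ dominates the leftover pool after depletion: there the straightforward ``internal-then-cross'' strategy may not supply $N/4$ extra pairs, and one must instead trade some internal $\alpha(G_1)$-pairs in favor of additional cross-component pairs with $G_j$, using the greedy inequality to ensure enough $G_1$-vertices remain available. A careful linear combination of these strategies should make the accounting close with the factor $4$. Polynomial-time computability is immediate, since the UCD is computable in polynomial time and the greedy path plus resulting comb are obtained by a single linear traversal of $\mathcal{T}$.
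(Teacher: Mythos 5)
Your approach is genuinely different from the paper's, and it contains a real gap that you acknowledge but do not resolve. The paper does not descend greedily to a leaf; at each level it removes the universal clique, looks at the resulting components, and recurses only into the (necessarily unique) component of size greater than $\alpha$, stopping as soon as every remaining component has size at most $\alpha$. The teeth then split into $R_<$, the small components discarded along the way, and $R_>$, the components present at the final level. This stopping rule makes the bound fall out of a direct, non-inductive counting argument: $|R_<| \leq \alpha$ because every vertex of $R_<$ is non-adjacent to the entire last large component, which has more than $\alpha$ vertices; and $|R_>| \leq 3\alpha$ by taking an $(\alpha+1)$-packing $R'$ of the final components (at most $2\alpha$ vertices, since each such component has size at most $\alpha$) and observing that $R_> \setminus R'$ has at most $\alpha$ vertices, being entirely non-adjacent to $R'$.

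Your inductive route hinges on the inequality $\alpha(G) \geq \alpha(G_1) + N/4$, which you never actually prove. You correctly identify the hard case, namely when a single external component dominates the leftover pool after the spare $G_1$-vertices are depleted, so that one must trade internal anti-matching pairs for cross-component pairs; but the accounting for that trade (and for the genuinely multi-component situation, where the greedy bound $|V(G_j)| \leq |V(G_1)|$ alone is not evidently enough) is precisely where the work lies, and your sketch stops there. I believe the inequality does hold, but as written the argument halts at the one step that matters. The paper's stopping rule sidesteps this inequality entirely: because the recursion stops once all components are small, the large component at the previous level is an explicit partner set of size greater than $\alpha$ that directly witnesses the required anti-matching, making the count immediate and avoiding any recurrence between $\alpha(G)$ and $\alpha(G_1)$.
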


\begin{proof}
    We provide a constructive proof that will directly imply the last part of the result. 
    Recall that any trivially perfect graph contains 
    a universal vertex and let $U_1 \subseteq V(G)$ be the universal clique of $G$. 
    Let $R^1_1, \ldots, R^1_{p_1}$ denote the connected components of $G \setminus U_1$. 
    Since $G$ does not contain any $(\alpha+1)$-sized anti-matching, there is at most one set $R^1_i$, 
    $1 \leqslant i \leqslant p_1$ such that $\size{R^1_i} > \alpha$ (as there is no edge between $R^1_i$ and $R^1_j$, 
    $1 \leqslant i < j \leqslant p_1$).  

    Assume without loss of generality that $\size{R^1_1} > \alpha$. 
    We add all vertices of $\cup_{i=2}^{p_1}R^1_i$ to some set $\RL$ and we will repeat this process on $G[R^1_1]$ until every connected component is smaller than $\alpha$. 
    More formally, at step $j>1$, for the trivially perfect graph $G_j = G[R^{j-1}_1]$, let $U_j$ be its universal clique and $R^j_1, \dots, R^j_{p_j}$ be the connected components of $G_j\setminus U_j$. Let $ R^j_1$ be the one of size greater than $\alpha$ if it exists, if it does not, stop the process and let $l$ be the last step. In particular, $|R^l_i|\leq \alpha$, $1\leq i\leq p_l$.
    Let $\RL = \bigcup_{j=1}^{l-1}\bigcup_{i=2}^{p_j} R_i^j $ and $\RU = R^l_1 \cup \dots \cup  R^l_{p_l}$. 

    Recall that $\size{R^{l-1}_1} > \alpha$ by construction. 
    This implies that $\size{\RL} \leqslant \alpha$ since otherwise $G[\RL \cup R^{l-1}_1]$ would contain a 
    $(\alpha+1)$-sized anti-matching. 
    We claim that $\size{\RU} \leqslant 3\alpha$. 
    To support this claim, let us consider the \SPP{\alpha+1} $\{R^l_1, \ldots, R^l_q\}$ of $\{R^l_1 ,\ldots, R^l_{p_l}\}$ and let $R' = \bigcup_{i=1}^q R_i^l$ be its vertices. Let $R'' = \RU \setminus R'$. Recall that $l$ is the last step of the process and $\size{R_i^l} \leqslant \alpha$ for $1 \leqslant i \leqslant p_l$. 
    Hence by \cref{obs:packing} it holds that $|R'| \leqslant 2\alpha$.  
    Thus, we have that $\size{R''} \leqslant \alpha$ since otherwise $G[R' \cup R'']$ would contain 
    a $(\alpha+1)$-sized anti-matching, 
    a contradiction. Hence $\size{\RU} = \size{R'} + \size{R''} \leqslant 3\alpha$. 
    
    To obtain a \WC{} for $G$ we consider the set $C = \{U_1, \ldots, U_l\}$ as the shaft 
    (recall that $U_1$ is the universal clique of $G$ and that $U_j$ denotes the universal clique of $G[R^{j-1}_1]$ at every 
    step $1 < j \leqslant l$). 
    Moreover, for every $1 \leqslant j < l$, the tooth $R_j$ is equal to $R_2^j\cup \ldots\cup R_{p_j}^j$, 
    the last tooth $R_l$ being $R_>$. By construction $(C, R = \bigcup_{j=1}^l R_j)$ is
    a \WC{} of $G$ such that $\size{R} = \size{R_<} + \size{R_>} \leqslant 4\alpha$.  
    This concludes the proof. 
\end{proof}

\begin{lemma}
    \label{lem:reducedmodule}
    Assume that the instance $(G,k)$ is reduced under  \cref{rule:compTP,rule:borneCC-TP,rule:dam,rule:shaft} 
    and let $M$ be a trivially perfect module of $G$. Then $M$ contains at most $11k+2$ vertices.  
\end{lemma}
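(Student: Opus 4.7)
The plan is to use the bound on the maximum anti-matching of $G[M]$ guaranteed by \cref{rule:dam} together with the structural result of \cref{lem:nodam} and the shaft bound from \cref{obs:shaft}. Since the instance is reduced under \cref{rule:dam}, the graph $G[M]$ contains no $(k+1)$-sized anti-matching, so its maximum anti-matching $\alpha$ satisfies $\alpha \leqslant k$. I distinguish three cases according to the structure of $G[M]$.

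First, if $G[M]$ is a clique, then since $M$ is a module all vertices of $M$ share the same closed neighborhood in $G$, so $M$ is a subset of a critical clique of $G$ and \cref{rule:borneCC-TP} yields $\size{M} \leqslant k+1$.

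Second, suppose $G[M]$ is connected and not a clique. I would apply \cref{lem:nodam} to $G[M]$ to obtain a comb $(C, R)$ of $G[M]$ with $V(G[M]) = C \cup R$ and $\size{R} \leqslant 4\alpha \leqslant 4k$. The key step is to show that $(C, R)$ extends to a comb of $G$ so that \cref{obs:shaft} applies. Writing $U_1$ for the first critical clique of the shaft, this clique is a critical clique of $G[M]$ but not necessarily one of $G$: it may need to be extended by the set $W \subseteq N_G(M)$ of vertices that are true twins of $U_1$ in $G$. After replacing $U_1$ by $U_1 \cup W$ in the shaft and setting $V_p = N_G(M) \setminus W$ and $V_f = \emptyset$, one verifies using \cref{def:comb} and the module property of $M$ that the resulting pair is a comb of $G$ (the critical cliques $U_j$ for $j \geqslant 2$ are automatically critical cliques of $G$ since they are not adjacent to every vertex of $M$, so no vertex of $N_G(M)$ can be a true twin of them). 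Then \cref{obs:shaft} bounds the enlarged shaft by $6k+2$, so $\size{C} \leqslant 6k+2$ and $\size{M} = \size{C} + \size{R} \leqslant 10k+2$.

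Third, suppose $G[M]$ is disconnected with components $M_1, \ldots, M_c$ of sizes $n_1 \geqslant \cdots \geqslant n_c$. Each $M_i$ is itself a connected trivially perfect module of $G$, since vertices of $M_i$ share the external neighborhood $(M \setminus M_i) \cup N_G(M)$. If $n_1 \leqslant \sum_{i \geqslant 2} n_i$, then the complete $c$-partite graph on the $M_i$'s admits a matching of size $\lfloor \size{M}/2 \rfloor$, whose pairs are all non-edges of $G[M]$, yielding $\alpha \geqslant \lfloor \size{M}/2 \rfloor$ and thus $\size{M} \leqslant 2k+1$. Otherwise $n_1 > \sum_{i \geqslant 2} n_i$, so each vertex of $M \setminus M_1$ can be paired with a distinct vertex of $M_1$ to form a cross-component non-edge, giving $\alpha \geqslant \size{M \setminus M_1}$ and hence $\size{M \setminus M_1} \leqslant k$. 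Applying the previous cases to the connected trivially perfect module $M_1$ yields $\size{M_1} \leqslant 10k+2$, so in total $\size{M} \leqslant 11k+2$. The main obstacle is the second case: lifting the comb of $G[M]$ into a comb of $G$ requires careful handling of the first critical clique together with the verification of all adjacency conditions in \cref{def:comb}.
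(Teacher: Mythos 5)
Your proof follows essentially the same strategy as the paper's: invoke \cref{lem:nodam} to decompose a connected module into a comb, bound the shaft via \cref{obs:shaft}, and control disconnected modules through the anti-matching bound. You do two things more carefully than the paper: you treat the clique case explicitly (where \cref{lem:nodam} would produce a degenerate object with empty teeth, not a valid comb), and you verify that the comb of $G[M]$ actually lifts to a comb of $G$ by absorbing the true twins of the top universal clique into the shaft and setting $V_p = N_G(M)\setminus W$, $V_f=\emptyset$; the paper tacitly identifies the comb of $G[M]$ with a comb of $G$. Your disconnected case also uses a clean split on whether $n_1$ is at most or exceeds $\sum_{i\geq 2} n_i$ instead of the paper's packing argument; both yield the same $11k+2$ bound.

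There is, however, a small but genuine gap in your opening sentence. Being reduced under \cref{rule:dam} does \emph{not} imply that $G[M]$ has no $(k+1)$-sized anti-matching: the rule removes $M\setminus V(D)$, so a reduced module may satisfy $M = V(D)$ for a $(k+1)$-sized anti-matching $D$, in which case $|M|=2k+2$ and the maximum anti-matching size is $\alpha = k+1$, not $\alpha\leq k$. Your later applications of \cref{lem:nodam} with the bound $|R|\leq 4\alpha\leq 4k$, and the matching argument giving $|M|\leq 2k+1$, both rely on $\alpha\leq k$ and would not follow. The fix is exactly the paper's first line: dispose of the case where $M$ contains a $(k+1)$-sized anti-matching by observing that then $M=V(D)$ and $|M|=2k+2\leq 11k+2$; only afterwards may you assume $\alpha\leq k$ and proceed as you do.
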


\begin{proof}   
    Observe that if $M$ contains an anti-matching of size more than $k$, then it is reduced under \cref{rule:dam} and contains $2k+2$ vertices. 
    Hence, suppose that $M$ does not contain a $(k+1)$-sized anti-matching. 
    Assume first that $G[M]$ is connected. Let $(C,R)$ be a comb obtained through \cref{lem:nodam}, such that $C\cup R = M$ and $|R|\leq 4k$. 
    By \cref{obs:shaft} we have that $\size{C} \leqslant 6k+2$. 
    It follows that $\size{M} \leqslant \size{C} + \size{R} \leqslant 10k+2$. 
    
    \smallskip
    
    To conclude it remains to deal with 
    the case where $G[M]$ is disconnected. Let $\{M_1, \ldots, M_p\}$ denote the connected components of $G[M]$.
    As $M$ does not contain a $(k+1)$-sized anti-matching, at most one of its connected component has size greater than $k$, we may assume w.l.o.g. that it is $M_1$, if existent.
    Let $\mathcal{C}$ be the \SPP{k+1} of $\{M_1, \ldots, M_p\}$. As $|M_1|\leq 10k+2$ and $|M_i|\leq k$ for $2\leq i\leq p$, we have that $|\mathcal{C}|\leq 10k+2$.
    Moreover, since $M$ does not contain any $(k+1)$-sized anti-matching, $|M\setminus \mathcal{C}|\leq k$ and thus $|M|\leq 11k+2$.
    This concludes the proof.
\end{proof}

\subsection{Combs with large teeth} 
\label{subsec:teeth}

We now turn our attention to the case were a given comb  contains many vertices in its teeth. 
The arguments are somewhat symmetric to the ones used in the proof of \cref{lem:shaft}. 
The main difference lies 
in the fact that the information provided by unaffected 
vertices differ when they are contained in the teeth rather than in the shaft. 

\begin{polyrule}
    \label{rule:teeth}
    Let $(C,R)$ be a \WC{} of $G$ such that there exist three disjoint sets $\mathcal{R}_a, \mathcal{R}_b$ and $\mathcal{R}_c$ where:
    \begin{itemize}
        \item $\mathcal{R}_a$ is a \SPP{2k+1} of $\{R_1, \ldots, R_l\}$,
        \item $\mathcal{R}_c = \{R_l, \dots, R_q\}$ is a \SPP{2k+1} of $\{R_l, \ldots, R_1\}$,
        \item $\mathcal{R}_b$ is a \SPP{2k+1} of $\{R_{q-1}, \ldots, R_1\}$,
    \end{itemize}
    Remove $R' = R \setminus (\mathcal{R}_a \cup \mathcal{R}_b \cup \mathcal{R}_c)$ from $G$.  
\end{polyrule}

\begin{lemma}
    \label{lem:teeth}
    \cref{rule:teeth} is safe. 
\end{lemma}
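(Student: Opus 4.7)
The proof of \cref{lem:teeth} follows the architecture of \cref{lem:shaft} with the roles of shaft and teeth exchanged and three unaffected witnesses in place of two. Let $G' = G \setminus R'$; the forward direction is immediate by heredity. For the reverse, given a $k$-editing $F'$ of $G'$ with $H' = G' \triangle F'$, since each of $\mathcal{R}_a$, $\mathcal{R}_b$, $\mathcal{R}_c$ contains at least $2k+1$ vertices and $F'$ affects at most $2k$ of them, we select unaffected witnesses $r_a \in \mathcal{R}_a$, $r_b \in \mathcal{R}_b$, $r_c \in \mathcal{R}_c$ lying in teeth $R_\alpha$, $R_\beta$, $R_\gamma$ with $\alpha \leq n_a < n_b \leq \beta \leq q-1 < q \leq \gamma \leq l$, where $n_a$ denotes the largest index in $\mathcal{R}_a$ and $n_b$ the smallest index in $\mathcal{R}_b$. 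Being unaffected, their $H'$-neighborhoods coincide with their $G$-neighborhoods, so the strictly nested shaft-neighborhoods $C_1 \cup \ldots \cup C_\alpha \subsetneq C_1 \cup \ldots \cup C_\beta \subsetneq C_1 \cup \ldots \cup C_\gamma$ persist in $H'$.

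A pivotal observation is that in $G'$, every critical clique $C_j$ with $j \in [n_a+1, n_b-1]$ has the same neighborhood, since they all lose exactly the same teeth (namely $R_j \cup \ldots \cup R_{n_b-1}$) when $R'$ is removed; therefore $\CM := C_{n_a+1} \cup \ldots \cup C_{n_b-1}$ collapses into a single critical clique of $G'$. Setting $\GM = G'$ and $\HM = H'$, I would pick an appropriately chosen maximal clique $\SM$ of $\HM$ and apply \cref{lem:TP_recons} to decompose $\HM \setminus \SM$ into modules with nested neighborhoods. Mimicking \cref{claim:RMmodule}, I would select a vertex of $\CM$ minimizing its participation in pairs of $F'$ and invoke \cref{prop:blowup} to obtain a $k$-editing $\OPT{F}$ of $\GM$ with $|\OPT{F}| \leq |F'|$ such that $\CM$ is a module in $\OPT{H} := \GM \triangle \OPT{F}$ with $\OPT{H}[\CM] = G[\CM]$. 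Re-inserting the removed teeth $R'$ with their original $G$-neighborhoods then yields $H = G \triangle \OPT{F}$, with $\OPT{F}$ containing no pair incident to $R'$.

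The main obstacle, parallel to \cref{claim:tp}, is to verify that $H$ is trivially perfect by checking the hypotheses of \cref{lem:TP_recons} at an enlarged maximal clique. I would classify the components of $H$ minus this clique by their joint $(r_a, r_b, r_c)$-adjacency pattern, refining the $\alpha/\beta/\delta$ split used in \cref{claim:tp}. All three packings are needed here: $r_a$ and $r_b$ bracket the reinsertion zone in the teeth-ordering on either side of $R'$, while $r_c$ — strictly beyond $\mathcal{R}_b$ — ensures that the ``deep'' shaft $C_q \cup \ldots \cup C_l$ preserves its nested adjacency across the editing. Together they guarantee that each re-inserted tooth $R_j$, attached as a module with neighborhood $V_p \cup C_1 \cup \ldots \cup C_j$, fits into the nested family without violating condition (ii) of \cref{lem:TP_recons}.
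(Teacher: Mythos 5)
Your high-level architecture is the same as the paper's (three unaffected witnesses, turn $\CM$ into a clique module by rerouting the editing, then verify \cref{lem:TP_recons} around an appropriate maximal clique), but there are two concrete gaps that would make the proof as outlined fail.

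First, you set $\GM = G'$ and $\HM = H'$, whereas the paper sets $\GM = G\setminus\RM$ and $\HM = H'\setminus\RM$ where $\RM$ is the set of \emph{all} teeth strictly between the tooth containing $r_a$ and the tooth containing $r_b$; this $\RM$ contains $R'$ but is in general strictly larger (it also absorbs the surviving teeth of $\mathcal{R}_a$ above $r_a$ and of $\mathcal{R}_b$ below $r_b$). This choice is not cosmetic: it is precisely because \emph{all} teeth between $R_a$ and $R_b$ have been peeled off that the critical cliques $C_{a+1},\dots,C_b$ lose their distinguishing neighborhoods and merge into a single critical clique of $\GM$, and that the analogue of \cref{claim:CMmodule} can go through cleanly. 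In your variant the teeth of $\RM\setminus R'$ remain present in $G'$ and may be touched by $F'$, so your ``$\CM$ collapses into one critical clique of $G'$'' observation (which, incidentally, is off by one — it is $C_{n_a+1}\cup\dots\cup C_{n_b}$ that collapses, not $C_{n_a+1}\cup\dots\cup C_{n_b-1}$) does not, by itself, give you the control you need over $\HM$; you would have to also argue about these residual teeth, and you do not.

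Second, and more locally fatal, you invoke \cref{prop:blowup} to obtain the clique-module structure on $\CM$. The hypothesis of \cref{prop:blowup} is that the closed neighborhood of the blown-up vertex is a clique, and this is \emph{false} for a vertex $\VM$ of the shaft: it is adjacent to several pairwise non-adjacent teeth, so $N[\VM]$ is not a clique. The paper handles this step differently: it removes $\CM\setminus\{\VM\}$ and then re-adds the removed vertices as \emph{true twins} of $\VM$, using the fact that trivially perfect graphs are closed under true twin addition — a correct argument that does not need the clique-neighborhood hypothesis. Your proposal cannot simply borrow the mechanism of \cref{claim:RMmodule}, because there the blown-up vertex $\VM\in\RM$ is shown to have a clique neighborhood inside $\SM$, which is exactly the hypothesis your shaft vertex lacks.

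Finally, you never isolate the key structural fact the paper extracts from $r_c$, namely that $N_{H'}(r_b)\setminus R_b$ is a clique in $H'$ (\cref{claim:rbclique}, proved via the potential $C_4$ $\{r_b,u,v,r_c\}$). This is what makes $S = N_{\tilde H}[r_b]$ a legitimate maximal clique to hang \cref{lem:TP_recons} on; saying that $r_c$ ``ensures the deep shaft preserves its nested adjacency'' gestures at it but does not pin down the clique property that the whole endgame depends on.
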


\begin{proof}
    Let $G' = G \setminus R'$ be the graph obtained after application of~\cref{rule:teeth}. 
    Since $G'$ is an 
    induced subgraph of $G$ and since trivially perfect graphs are 
    hereditary, any $k$-editing of $G$ is a $k$-editing of $G'$. 
    
    For the reverse direction, let $F'$ be a $k$-editing of $G'$ and $H'=G' \triangle F'$. We will construct a $k$-editing $\OPT{F}$ of $G$.   
    Let $r_a$, $r_b$ and $r_c$ be unaffected vertices 
    in $\mathcal{R}_a$, $\mathcal{R}_b$ and $\mathcal{R}_c$, respectively.
    Note that these vertices exist as these sets contain at least $2k+1$ vertices and $F'$ affects at most $2k$ vertices. 
    Let $R_a$, $R_b$ and $R_c$, $1 \leqslant a < b < c  \leqslant l$,  
    be the 
    teeth of $R$ containing $r_a$, $r_b$ and $r_b$, respectively (these sets are well-defined since the packings $\mathcal{R}_a$, $\mathcal{R}_b$ and $\mathcal{R}_c$ are disjoint). 
    Moreover, since $r_a$, $r_b$ and $r_c$ are unaffected by 
    $F'$ their neighborhoods are equal in  
    $G'$ and $H'$ and hence $\big ( N_{H'}(r_a) \setminus R_a \big ) \subseteq \big ( N_{H'}(r_b) \setminus R_b \big ) \subseteq \big (N_{H'}(r_c) \setminus R_c \big )$.

    \begin{claim}
        \label{claim:rbclique}
        The set $N_{H'}(r_b) \setminus R_b$ is a clique in $H'$. 
    \end{claim}

    \begin{proofclaim}
        Assume for a contradiction that $N_{H'}(r_b) \setminus R_b$ 
        contains a non-edge $\{u,v\}$. 
        Recall that there is no edge between $R_b$ and $R_c$. Hence, 
        since $\big ( N_{H'}(r_b) \setminus R_b \big ) \subseteq \big ( N_{H'}(r_c) \setminus R_c \big )$ 
        we have that the set $\{r_b,u,v,r_c\}$ induces a $C_4$ in $H'$, 
        a contradiction. 
    \end{proofclaim}    
    
    Let $\RM = R_{a+1} \cup \ldots \cup R_{b-1}$ 
    and $\CM= C_{a+1} \cup \ldots \cup C_{b}$.   
    Similarly, let 
    $\CL = C_1 \cup \ldots \cup C_a$,   
    $\RL = R_1 \cup \ldots \cup R_{a}$  
    and $\RU = R_b \cup \ldots \cup R_l$.   
    Finally, let $\GM = G \setminus \RM$ and $\HM = H'\setminus \RM$. 
    These sets are depicted Figure~\ref{fig:teeth}.   
    Notice in particular that 
    $\HM$ is trivially perfect and that 
    $R' \subseteq \RM$. 
    Let $\FM \subseteq F'$ be the $k$-editing such that $\HM = \GM \triangle \FM$.  
    We first modify $\FM$ to 
    obtain a $k$-editing of $\GM$ where every vertex of $\CM$ is affected uniformly.  

    \begin{figure}[ht]
        \centering
        \includegraphics[scale=1.1]{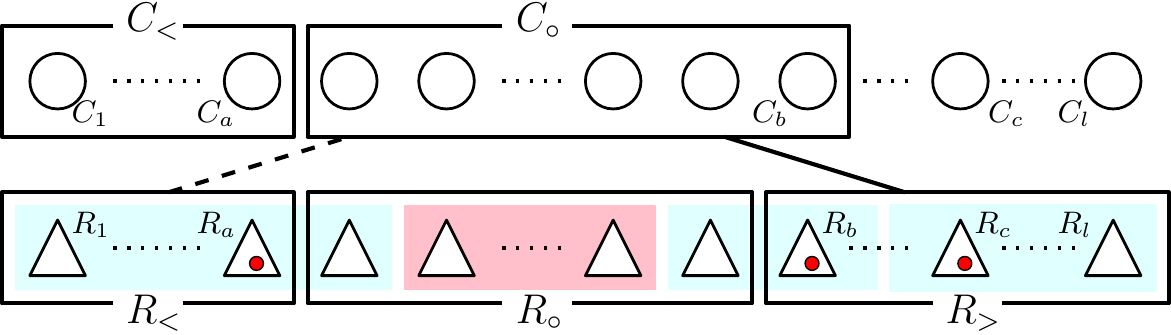}
        \caption{Illustration of the comb and the sets used in the proof of \cref{lem:teeth}. The circles are critical cliques of the shaft and the triangles are teeth. The red vertices correspond to $r_a$, $r_b$ and $r_c$, the light blue rectangles correspond to sets $\mathcal{R}_a$, $\mathcal{R}_b$ and $\mathcal{R}_c$ and the light red rectangle corresponds to $R'$, which is removed by \cref{rule:teeth}. \label{fig:teeth}}
    \end{figure}
    
    \begin{claim}
        \label{claim:CMmodule}
        There exists a $k$-editing $\OPT{F}$ of $\GM$ such that, 
        in $\OPT{H} = \GM \triangle \OPT{F}$, $\CM$ is a 
        clique module.
    \end{claim} 

    \begin{proofclaim}
        Note that $\CM$ is a critical clique in $\GM$ since 
        $C \supset \CM$ is a module in $G \setminus R$ (\cref{obs:combmodules}) and vertices of $\CM$ are non-adjacent to vertices of $\RL$ and adjacent to vertices of $\RU$. 
        \smallskip
        Assume now that $\CM$ is not a clique module in $\HM$ and let 
        $\VM \in \CM$ be a vertex contained in the least number of pairs 
        of $\FM$. 
        Consider the graph $\HM'  = \HM \setminus (\CM \setminus \{\VM\})$, which is trivially perfect by heredity, and 
        let $\OPT{H}$ be the graph obtained from $\HM'$
        by adding vertices of $\CM \setminus \{\VM\}$ as true twins of $\VM$. Let $\OPT{F}$ be the editing such that $\OPT{H} = \GM\triangle \OPT{F}$. The graph $\OPT{H}$ is trivially perfect as the class of trivially perfect graphs is closed under true twin addition. 
        It follows from construction that $\CM$ is a clique module in $\OPT{H}$ and by the choice of $\VM$, $\size{\OPT{F}} \leqslant \size{\FM}$. 
    \end{proofclaim}

    We henceforth consider $\OPT{H} = \GM \triangle \OPT{F}$ 
    where $\OPT{F}$ is the editing from   \cref{claim:CMmodule}. 
    We now show that vertices of $\RM$ can be added into $\OPT{H}$ while ensuring it remains trivially perfect. 

    \begin{claim}
        \label{claim:tpteeth}
        The graph $H = G \triangle \OPT{F}$ is trivially perfect.
    \end{claim}

    \begin{proofclaim}
        We start by removing the vertices of $R_b \setminus \{r_b\}$ from $\OPT{H}$, which will give us more control on the neighborhood
        of $r_b$ and ease some arguments. 
        Let $\tilde{H} = \OPT{H} \setminus (R_b\setminus \{r_b\})$, this graph is trivially perfect by heredity. 
        Let $S$ be a maximal clique of $\tilde{H}$ containing $r_b$. 
        By \cref{claim:rbclique}, $ N_{\tilde{H}}(r_b)$ is a clique and since $r_b$ is 
        unaffected by $\OPT{F}$ we have that
        $S = N_{\tilde{H}}[r_b] = \CL \cup \CM \cup V_p\cup \{r_b\}$.     
        We use \cref{lem:TP_recons} on $S$ 
        to obtain a set of connected components $\{K_1, \ldots, K_r\}$ 
        of $\tilde{H} \setminus S$ such that $\{K_1, \ldots, K_r\}$ are modules in 
        $\tilde{H}$ whose (possibly empty) neighborhoods in $S$ form a nested family. 
        We further split components 
        $\{K_1, \ldots, K_r\}$ into two types: $K_i$ is an $\alpha$-component if $N_{\tilde{H}}(K_i) \subseteq \big ( N_{\tilde{H}}(r_a) \cap S \big )$ and a $\beta$-component otherwise, $1 \leqslant i \leqslant r$. Since $N_{\tilde{H}}(r_a) \cap S = V_p \cup \CL$ we have that, for any $\alpha$-component $K_\alpha$, $N_{\tilde{H}}(K_\alpha) \subseteq V_p \cup \CL$. 
        Moreover, since $S = N_{\tilde{H}}[r_b]$ and since $\CM$ is a clique module in $\tilde{H}$ by \cref{claim:CMmodule}, every $\beta$-component $K_\beta$ satisfies $N_{\tilde{H}}(K_\beta) = V_p\cup \CL \cup \CM =  S \setminus \{r_b\}$. 

        \smallskip

        Observe now that $( V_p \cup \CL ) \subseteq N_{G}(\RM) \subseteq S\setminus \{r_b\}$. 
        In other words, the neighborhood of any tooth of $\RM$ contains the neighborhood of any $\alpha$-component and is contained in the neighborhood of any $\beta$-component. 
        Moreover the neighborhoods of the teeth of $\RM$ are nested in $G$ by definition of a comb.
        It follows that the vertices of $\RM$ can be safely added to $\tilde{H}$ with the same neighborhood as they have in $G$, ensuring that the resulting graph $H_b$ is trivially perfect. 
        It remains to add the vertices of $R_b$ back into the graph. By \cref{claim:rbclique,prop:blowup}, the graph $H = \BLOW{H_b}{G[R_b]}{r_b}$ is trivially perfect. 
    \end{proofclaim}

    By \cref{claim:tpteeth} the graph $H = G \triangle \OPT{F}$ is trivially perfect and as $|\OPT{F}| \leq k$, it follows that $\OPT{F}$ is a $k$-editing of $G$, concluding the proof of \cref{lem:teeth}.
\end{proof}

\begin{lemma}
    \label{lem:combs}
    Assume that the instance $(G,k)$ is reduced under \cref{rule:compTP,rule:borneCC-TP,rule:dam,rule:shaft,rule:teeth}. Let 
    $(C,R)$ be a comb of $G$. Then $\size{C \cup R} = O(k)$. 
\end{lemma}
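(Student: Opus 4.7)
The plan is to bound $\size{C}$ and $\size{R}$ separately by $O(k)$. For the shaft, \cref{obs:shaft} already provides $\size{C} \leqslant 6k+2$ since the instance is reduced under \cref{rule:borneCC-TP,rule:shaft}, so all the remaining work lies in bounding $\size{R}$ using \cref{rule:teeth}.

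The key preliminary observation is that each tooth $R_i$ is a trivially perfect module of $G$ by \cref{def:comb}, so \cref{lem:reducedmodule} applies and yields $\size{R_i} \leqslant 11k+2$ for every $i$. Plugging $c = 11k+2$ and $r = 2k+1$ into \cref{obs:packing}, any \SPP{2k+1} of teeth covers at most $13k+2$ vertices. This is the single combinatorial fact that lets the counting argument close.

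The heart of the proof is a case analysis on why \cref{rule:teeth} fails to apply. I would greedily attempt to build the three packings $\mathcal{R}_c$, $\mathcal{R}_b$ and $\mathcal{R}_a$ in that order: $\mathcal{R}_c$ as a $(2k+1)$-packing of $\{R_l, \ldots, R_1\}$, then $\mathcal{R}_b$ as a $(2k+1)$-packing of $\{R_{q-1}, \ldots, R_1\}$ where $q$ is the leftmost index appearing in $\mathcal{R}_c$, and finally $\mathcal{R}_a$ as a $(2k+1)$-packing of $\{R_1, \ldots, R_l\}$ constrained to lie strictly to the left of $\mathcal{R}_b$. If $\mathcal{R}_c$ cannot be built then $\size{R} < 2k+1$. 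If $\mathcal{R}_c$ exists but $\mathcal{R}_b$ fails, then the teeth strictly to the left of $\mathcal{R}_c$ contain fewer than $2k+1$ vertices, so $\size{R} \leqslant 2k + (13k+2) = 15k+2$. Otherwise, if all three packings could be constructed the rule would fire, contradicting the hypothesis; hence $\mathcal{R}_a$ must fail, meaning the teeth strictly to the left of $\mathcal{R}_b$ contain fewer than $2k+1$ vertices, which gives $\size{R} \leqslant 2k + 2(13k+2) = 28k+4$.

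Combining both bounds yields $\size{C \cup R} \leqslant (6k+2) + (28k+4) = O(k)$. I do not expect any serious obstacle: once \cref{lem:reducedmodule} bounds each tooth by $O(k)$, the packing machinery of \cref{sec:preliminaries} together with the disjointness constraints in \cref{rule:teeth} force the linear bound in a direct way. The only point requiring care is ensuring that the three greedy packings are genuinely disjoint, which is automatic from the way the ordered ranges $\{R_l, \ldots, R_q\}$ and $\{R_{q-1}, \ldots, R_1\}$ are set up in the statement of \cref{rule:teeth}.
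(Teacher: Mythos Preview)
Your proposal is correct and follows essentially the same approach as the paper: bound $\size{C}$ via \cref{obs:shaft}, bound each tooth via \cref{lem:reducedmodule}, and then argue that if $\size{R}$ were too large the three disjoint packings required by \cref{rule:teeth} would exist. The only difference is that your explicit case analysis yields the slightly sharper bound $\size{R} \leqslant 28k+4$ (hence $\size{C \cup R} \leqslant 34k+6$), whereas the paper uses the cruder estimate $\size{R} \leqslant 3(13k+2) = 39k+6$; either way the conclusion is $O(k)$.
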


\begin{proof} 
    First, note that $|C| \leq 6k+2$ thanks to \cref{obs:shaft}. 
    We proceed in the same fashion to bound the size of $R$. 
    As the teeth of a comb are trivially perfect modules, \cref{lem:reducedmodule} implies that $\size{R_i} \leqslant 11k+2$, $1 \leqslant i \leqslant l$. 
    Hence by \cref{obs:packing} any \SPP{2k+1} of $\{R_1, \ldots, R_l\}$ requires at most 
    $13k+2$ vertices. It follows that $\size{R} \leqslant 39k+6$ 
    since otherwise one could find three disjoint \SPP{2k+1}s of $R$ 
    that meet the requirements of \cref{rule:teeth}. Altogether we obtain that $\size{C \cup R} \leqslant 45k+8$ which concludes the proof. 
\end{proof}

\subsection{Reducing the graph exhaustively} 
\label{subsec:reducing}

We conclude this section by showing that the graph can be reduced in polynomial time. 

\begin{lemma}
    \label{lem:rulespoly}
    There is a polynomial time algorithm that outputs an instance 
    $G' = (V',E')$ such that none of Rules \ref{rule:compTP} to \ref{rule:teeth} applies. 
\end{lemma}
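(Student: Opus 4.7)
The plan is to show two things: first, that each of the five rules can be tested and executed in polynomial time; second, that every rule application strictly decreases $|V|$, so the exhaustive process halts after at most $|V|$ iterations and the overall running time is polynomial.

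For \cref{rule:compTP,rule:borneCC-TP}, connected components of $G$ are computed in linear time, trivial perfection of an induced subgraph is decidable in polynomial time (for instance by computing a universal clique decomposition or by searching for induced $C_4$ and $P_4$), and the partition of $V$ into critical cliques is obtained by grouping vertices with identical closed neighborhoods. Detecting and applying these two rules is thus polynomial.

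For \cref{rule:dam}, trivially perfect modules can be located through the modular decomposition of $G$, whose strong modules form a tree of size $O(n)$; testing trivial perfection of each induced subgraph $G[M]$ is polynomial. For any candidate $M$, a maximum anti-matching of $G[M]$ corresponds to a maximum matching of the complement $\overline{G[M]}$ and is therefore computable in polynomial time. Hence we can decide whether a trivially perfect module with a $(k+1)$-sized anti-matching exists, and if so apply the rule.

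For \cref{rule:shaft} and \cref{rule:teeth}, we must locate combs satisfying the stated packing conditions. Using \cref{prop:unique}, once a candidate pair $(C,R)$ of disjoint subsets is fixed, we can verify in polynomial time whether it is a comb: the ordered partitions, the sets $V_p, V_f$, and the trivial perfection of $G[C \cup R]$ are all checkable directly from \cref{def:comb}. To enumerate candidates, observe that the shaft $C$ is a clique of $G$ formed by a union of critical cliques, and that given such a $C$ the potential teeth are the modules of $G$ lying among the connected components of $G \setminus C$ with the adjacency pattern required by \cref{def:comb}. Equivalently, a comb corresponds to a root-to-descendant path in the UCD of the induced trivially perfect subgraph $G[C\cup R]$, so we may start from an arbitrary critical clique and greedily grow the shaft and the associated teeth, producing polynomially many candidates. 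Once a comb is identified, the existence of the required disjoint \PP{2k+1}s amounts to a greedy prefix/suffix scan along the ordered sequence of critical cliques or teeth, which is linear.

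Finally, each rule removes at least one vertex from $G$ upon application, so at most $|V|$ applications ever occur; combined with polynomial-time detection and execution per rule, this yields the claimed polynomial running time. The main obstacle is the efficient enumeration of combs: we rely on the rigidity guaranteed by \cref{prop:unique}, which pins down all the auxiliary data from $C \cup R$ alone, together with the UCD-based description of combs, to ensure that only polynomially many candidate structures need to be examined.
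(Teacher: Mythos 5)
There is a genuine gap in the handling of \cref{rule:shaft} and \cref{rule:teeth}. You propose to ``start from an arbitrary critical clique and greedily grow the shaft and the associated teeth, producing polynomially many candidates,'' but you never justify that only polynomially many combs need to be examined. In general, the number of combs in a graph can be exponential: a comb $(C,R)$ is only required to be a pair of disjoint subsets satisfying \cref{def:comb}, and distinct choices of which critical cliques to include in the shaft, how far to extend it, and which connected components to collect as teeth can all yield distinct valid combs. \cref{prop:unique} does say that the internal structure $(C_1,\dots,C_l)$, $(R_1,\dots,R_l)$, $V_p$, $V_f$ is determined once $(C,R)$ is fixed, so \emph{verifying} a comb is polynomial, but that rigidity does not bound the number of pairs $(C,R)$ one would have to try.

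The paper circumvents this by arguing that it is not necessary to reduce every comb. It splits the work into two cases: (a) for \cref{rule:dam,rule:shaft} applied \emph{inside a trivially perfect module}, it suffices to consider only \emph{strong} modules (which can be enumerated in linear time from the modular decomposition), and then to explicitly construct a single suitable comb inside any large enough strong module with a small anti-matching via \cref{lem:nodam}; (b) for combs not contained in a trivially perfect module, it invokes a result of \DPT{} that so-called \emph{critical combs} (inclusion-wise maximal combs with $V_f\neq\emptyset$ and $R\cup C\cup V_f$ not a trivially perfect module) can be enumerated in polynomial time and that every relevant comb is contained in such a critical comb. Your proposal lacks an analogue of either reduction, so the claim that ``only polynomially many candidate structures need to be examined'' is unsupported. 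You also do not argue why restricting to strong modules is \emph{sufficient} for the correctness of \cref{rule:dam}; you use strong modules only as a convenient enumeration and then speak of ``any candidate $M$,'' which does not address overlapping non-strong modules. The ``each application removes a vertex, hence at most $|V|$ iterations'' framing is fine, but it does not rescue the missing enumeration argument.
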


\begin{proof}
    First, \cref{rule:compTP,rule:borneCC-TP} can be applied in polynomial time 
    thanks to \cref{lem:simple}. 
    We now need to apply the other rules on trivially perfect modules and combs. For the modules, it is sufficient to reduce \emph{strong modules}, which are modules that do not overlap with other modules. 
    We can enumerate strong modules in linear time~\cite{TDH+08}. For each strong module $M$ we can check in polynomial time if it is trivially perfect. 
    We can moreover check if $M$ contains a $(k+1)$-sized anti-matching by finding a maximum matching in the complement graph $\overline{G[M]}$, for instance using Edmonds' algorithm~\cite{Edmonds65}. 
    If $M$ has a large anti-matching, then we can apply \cref{rule:dam}.
    Otherwise, if $\size{M} \geq 11k+2$ then   
    it can be reduced using \cref{rule:shaft}. 
    Indeed, $G[M]$ contains in this case 
    at most one connected component $M'$ with more than $k$ 
    vertices, such that $\size{M \setminus M'} \leq k$ (since otherwise $M$ would contain a $(k+1)$-sized anti-matching).   
    We compute a comb $(C,R)$ through \cref{lem:nodam} in 
    $G[M']$, with $\size{R} \leqslant 4k$. It follows that 
    $\size{C}>6k+2$ and \cref{obs:shaft} implies that 
    \cref{rule:dam} applies. 

    It remains to show that the combs not included in a trivially perfect module can be reduced in polynomial time. In order to do this \DPT{} showed that so-called \emph{critical combs} can be enumerated in polynomial 
    time, a critical comb being an inclusion-wise maximal comb where $V_f \neq \emptyset$ and  
    $R \cup C \cup V_f$ does not induce a trivially perfect module.
    In particular, critical combs contain every comb not included in a trivially perfect module. Hence it is sufficient to only reduce these combs. Given a critical comb, \cref{rule:shaft,rule:teeth} can be applied in polynomial time. This concludes the proof. 
\end{proof}

Combining \cref{thm:strategy} and \cref{lem:reducedmodule,lem:combs,lem:rulespoly} we obtain the main result of this work. 

\begin{theorem}
    \label{thm:result} 
    \TPE{} admits a kernel with $O(k^2)$ vertices. 
\end{theorem}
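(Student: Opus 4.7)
The plan is to combine the reduction rules designed in \cref{sec:rules} with the strategy given by \cref{thm:strategy}. Given an instance $(G,k)$ of \TPE{}, I would first invoke \cref{lem:rulespoly} to produce, in polynomial time, an equivalent instance $(G',k)$ on which none of \cref{rule:compTP,rule:borneCC-TP,rule:dam,rule:shaft,rule:teeth} applies. The equivalence is guaranteed by the safeness statements \cref{lem:simple,lem:moduleTP,lem:shaft,lem:teeth}, and note that none of the rules increases the parameter so the new parameter is still $k$.

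Once the instance is fully reduced, \cref{lem:reducedmodule} bounds the size of every trivially perfect module of $G'$ by $11k+2$, i.e.\ $p(k) \in O(k)$, and \cref{lem:combs} bounds the size of every comb by $45k+8$, i.e.\ $c(k) \in O(k)$. These are precisely the quantities controlling the bound in \cref{thm:strategy}.

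Next I would plug these values into \cref{thm:strategy}. If $(G',k)$ is a \YINST{}, then
\[
|V(G')| \;=\; O\!\bigl(k^2 + k \cdot (p(k)+c(k))\bigr) \;=\; O\!\bigl(k^2 + k \cdot O(k)\bigr) \;=\; O(k^2).
\]
If on the other hand $|V(G')|$ exceeds this explicit bound after exhaustive reduction, we can safely conclude that $(G',k)$ is a \NINST{} and output a trivial constant-size \NINST{} (e.g.\ a $C_4$ with parameter $0$). In either case the returned instance has $O(k^2)$ vertices, which yields the claimed quadratic vertex-kernel.

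The main step is therefore purely an assembly of the preceding results; the hard work has already been carried out in establishing the linear bounds on modules (\cref{lem:reducedmodule}) and combs (\cref{lem:combs}), so no additional obstacle remains at this final stage.
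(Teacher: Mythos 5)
Your proposal is correct and follows essentially the same logic as the paper: exhaustively apply the reduction rules via \cref{lem:rulespoly}, use \cref{lem:reducedmodule,lem:combs} to get $p(k), c(k) \in O(k)$, and plug into \cref{thm:strategy}, with the standard fallback of returning a trivial \NINST{} when the reduced graph exceeds the bound. The only thing to flag is that the paper does not treat \cref{thm:strategy} as a black box: as noted just after its statement, the cited proof from \cite{DPT23} only yields $O(k^3 + k\cdot(p(k)+c(k)))$ and requires a small adjustment, which the paper carries out in full inside the proof of \cref{thm:result} (tracking affected bags, lowest-common-ancestor closure, type-0/1/2 components, etc.); if you rely on \cref{thm:strategy} as stated, you are implicitly appealing to that corrected proof rather than to \cite{DPT23} directly, so strictly speaking that argument still needs to be supplied somewhere.
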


\begin{proof}
    We give a formal proof of \cref{thm:strategy} for the sake of completeness. Note that most arguments and notations are extracted from the proof of~\cite[Theorem 1]{DPT23}. 
    Recall that $c(k)$ and $p(k)$ are functions defined as, respectively, the maximum size of a trivially perfect module in and a comb of $G$ in \cref{thm:strategy}. 
    Let $(G=(V,E),k)$ be a reduced yes-instance of \TPE{} and $F$ a $k$-editing set of $G$. 
    Let $H=G\triangle F$ and $\mathcal{T} = (T, \mathcal{B})$ the universal clique decomposition of $H$. 
    The graph $G$ is not necessarily connected, thus $T$ is a forest. Let $A$ be the set of nodes $t\in V(T)$ such that the bag $B_t$ contains a vertex affected by $F$. 
    Since $ \vert F \vert  \leq k$, we have $ \vert A \vert \leq 2k$. Let $A' \subseteq V(T)$ be the least common ancestor closure of $A$ plus the root of each connected component of $T$.
    The least common ancestor closure is obtained as follows: start with $A' = A$ and while there is $u,v\in A'$ whose least common ancestor $w$ (in $T$) is not in $A'$, add $w$ to $A'$. 
    According to~\cite[Lemma 1]{FLM+12} the least common ancestor closure of $A$ is of size at most $2\vert A\vert$. 
    Moreover, \cref{rule:compTP} implies that there are at most $2k$ connected components in $H$ and thus $2k$ roots, hence $ \vert A' \vert  \leq 6k$. 
    
    Let $D$ be a connected component of $T \setminus A'$. We can observe that, by construction of $A'$, only three cases are possibles:
    \begin{itemize}
        \item $N_T(D) = \emptyset$ ($D$ is a connected component of $T$).
        \item $N_T(D) = \{a\}$ ($D$ is a subtree of $T$ whose parent is $a \in A'$). 
        \item $N_T(D) = \{a_1,a_2\}$ with one of the nodes $a_1,a_2 \in A'$ being an ancestor of the other in $T$.
    \end{itemize}
    Dumas \etal~\cite{DPT23} denote these connected components as respectively of type $0$, $1$ or $2$. For $D \subseteq V(T)$, let $W(D) = \bigcup_{t\in D}B_t$ denote the set of vertices of $G$ corresponding to bags of $D$. 
    
    There is no connected component of type 0 or else $W(D)$ would be a connected component of $G$ inducing a trivially perfect graph. 
    Rule~\ref{rule:compTP} would have been applied to this component, contradicting the fact that $G$ is a reduced instance.
    
    Now consider the set of type 1 components $D_1, D_2,\dots , D_r$ of $T \setminus A'$ attached in $T$ to the same node $a \in A'$. Dumas \etal~\cite{DPT23} showed that $W_a = W(D_1) \cup W(D_2) \cup \dots \cup W(D_r)$ is a trivially perfect module of $G$. 
    By \cref{lem:reducedmodule}, we have $ \vert W_a \vert  = c(k)$. There are at most $ \vert A' \vert  \leq 6k$ such sets $W_a$, thus the set of vertices of $G$ in bags of type 1 components is of size $O(k \cdot c(k))$. 
    
    Now consider the type 2 connected components $D$ of $T \setminus A'$ which have two neighbors in $T$. 
    Let $a_1$ and $a_2$ be these neighbors, one being the ancestor of the other, say $a_1$ is the ancestor of $a_2$. 
    Let $\{a_1, t_1,\ldots, t_l, a_2\}$ be the path from $a_1$ to $a_2$ in the tree.
    The component $D$ can be seen as a comb of shaft $(B_{t_1}, \dots, B_{t_l})$. 
    More precisely, by construction of the universal clique decomposition, $W(D)$ can be 
    partitioned into a comb $(C,R)$ of $H$: the critical clique decomposition of $C$ is $(C_1=B_{t_1},\dots, C_l=B_{t_l})$, and each $R_i$ corresponds to the union of bags of the subtrees rooted at $t_i$ which do not contain $B_{t_{i+1}}$, for $1 \leq i<l$, and to the union of bags of the subtrees rooted at $t_l$ which do not contain $a_2$, for $i=l$. Since $(C,R)$ was not affected by $F$, it is also a comb of $G$. Thus for each type 2 component $D$, $W(D)$ contains $p(k)$ vertices. 
    Since $T$ is a forest, it can contain at most $ \vert A' \vert  - 1 \leq 6k - 1$ such components in $T \setminus A'$. Therefore the set of bags containing type 2 connected components of $T \setminus A'$ contains $O(k \cdot p(k))$ vertices. \\ 
    It remains to bound the set of vertices of $G$ which are in bags of $A'$. The vertices corresponding to nodes of $A' \backslash A$ are critical cliques of $G$, and are hence of size at most $k+1$ by Rule~\ref{rule:borneCC-TP}. Thus the set of vertices in bags of $A' \backslash A$ is of size $O(k^2)$. 
    We conclude by showing a similar bound for vertices in bags of $A$.  
    Such vertices induce critical cliques in $H$ but not necessarily in $G$. However, note that in each such 
    critical clique the set of 
    vertices not affected by $F$ correspond to clique modules in $G$ (not necessarily maximal). Such vertices are contained in 
    exactly one critical clique of $G$ and have thus been 
    reduced by \cref{rule:borneCC-TP}. It follows 
    that the set of affected critical cliques of $H$ contains at most $2k + 2k \cdot (k+1)$ vertices.     
    Altogether we 
    obtain that $\size{V(G)} = O(k^2 + k \cdot (p(k)+c(k)))$ which concludes the proof of \cref{thm:strategy}. To obtain \cref{thm:result} we simply recall that \cref{lem:reducedmodule,lem:combs} imply that $c(k) = O(k)$ and $p(k) = O(k)$, respectively.
\end{proof}

\section{The deletion variant}
\label{sec:deletion}

As mentioned in the introduction, a quadratic vertex-kernel is known to exist for \textsc{Trivially Perfect Completion}~\cite{BBC+22}. 
The results presented \cref{sec:rules} can be adapted to prove that \textsc{Trivially Perfect Deletion} also admits a quadratic vertex-kernel by simply replacing any mention of ``editing'' by ``deletion''. 

More precisely, one can see that in order to prove the safeness of \cref{rule:dam,rule:shaft,rule:teeth}, 
the $k$-editing $\OPT{F}$ for the original graph that is derived from a $k$-editing $F'$ for the reduced instance only uses operations that were done by $F'$. In particular, if $F'$ only contains non-edges then so does $\OPT{F}$, meaning that it is a valid solution. Together with the fact that \cref{rule:compTP,rule:borneCC-TP} are safe for the deletion variant, we obtain the following. 

\begin{theorem}
    \label{thm:resultTPD} 
    \textsc{Trivially Perfect Deletion} admits a kernel with $O(k^2)$ vertices. 
\end{theorem}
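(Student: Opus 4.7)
The plan is to reuse the entire kernelization pipeline of \cref{sec:rules} verbatim and verify only that each reduction rule remains safe when the allowed modifications are restricted to edge deletions. Since the vertex-count bound of \cref{thm:strategy} depends only on the sizes of trivially perfect modules and combs, and since \cref{lem:reducedmodule,lem:combs} establish that these sizes are $O(k)$ whenever the rule set has been applied exhaustively, the final $O(k^2)$ bound will transfer verbatim once deletion-safeness has been re-established rule by rule.

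The standard \cref{rule:compTP,rule:borneCC-TP} are well known to be safe also for the deletion variant (\cf~\cite{BPP10,DP18}): discarding a connected trivially perfect component never requires any edge addition, and trimming excess true twins from a critical clique of size larger than $k+1$ preserves the answer regardless of which side of the symmetric difference is allowed. The crux is therefore \cref{rule:dam,rule:shaft,rule:teeth}.

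For these three rules, the plan is to revisit the safeness proofs of \cref{lem:moduleTP,lem:shaft,lem:teeth} and observe that in each of them the $k$-editing $\OPT{F}$ reconstructed for $G$ from a $k$-editing $F'$ of the reduced instance $G'$ is obtained by (i) keeping pairs of $F'$ unchanged, (ii) dropping some pairs of $F'$, or (iii) duplicating the operations of $F'$ incident to a chosen unaffected witness $\VM$ onto the vertices of a module or a set of true twins that replaces $\VM$ via \cref{prop:blowup}. Only step (iii) is delicate: when the blow-up replaces $\VM$ by some graph $G[M]$ containing $\VM$, the new adjacencies between $M$ and its exterior coincide with those of $\VM$ in the intermediate trivially perfect graph. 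Since $M$ is a module of $G$, every edge present at $\VM$ in $G$ is automatically present at every vertex of $M$ in $G$, so a pair of $\OPT{F}$ is an addition only if the corresponding pair of $F'$ is already an addition at $\VM$. The intra-$M$ adjacencies are lifted directly from $G$ by \cref{def:blowup}, so no internal addition can arise either. In particular, if $F'$ contains no edge addition, neither does $\OPT{F}$, and $\OPT{F}$ is a valid $k$-deletion of $G$.

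The main obstacle I expect is the slightly more involved construction in \cref{lem:teeth}, where the proof first declares the vertices of $\CM$ true twins of $\VM$ (\ie{} a blow-up by a complete graph) and then performs a further blow-up $\BLOW{H_b}{G[R_b]}{r_b}$ over the tooth $R_b$; both transformations must be checked to only mirror operations of $F'$. This follows because the witness $r_b$ is unaffected by $F'$, so the adjacencies of $r_b$ in $H_b$ differ from those in $G$ only by the deletions of $F'$, while $G[R_b]$ and the module structure of $R_b$ are copied as is. Combining deletion-safeness of all rules with the linear bounds of \cref{lem:reducedmodule,lem:combs} and plugging $p(k), c(k) = O(k)$ into \cref{thm:strategy} then yields the claimed $O(k^2)$-vertex kernel for \textsc{Trivially Perfect Deletion}.
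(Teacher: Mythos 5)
Your proposal is correct and follows the same route as the paper: re-examine the safeness proofs of the new rules and observe that the reconstructed editing $\OPT{F}$ only mirrors operations already present in $F'$, so the deletion-only property is preserved, after which the $O(k)$ bounds on module and comb sizes plug into \cref{thm:strategy} unchanged. The paper states this at a higher level (``$\OPT{F}$ only uses operations that were done by $F'$''), whereas you decompose the reconstruction into keep/drop/duplicate and trace the blow-ups in \cref{lem:moduleTP,lem:shaft,lem:teeth} explicitly; this is a more detailed justification of the same observation, not a different argument.
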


We conclude by mentioning that \cref{thm:result} also holds for \TPC{} for the same reasons as the deletion variant. However, a vertex-kernel with $O(k^2)$ is already known for this problem~\cite{BBC+22}. Moreover, the constant factor on the number of vertices is smaller than the one of our kernel.

\section{Conclusion}
\label{sec:conclusion}

In this work we improved known kernelization algorithms for the 
\TPE{} and \textsc{Trivially Perfect Deletion} problems, providing a quadratic vertex-kernel for both of them. This
matches the best known bound for the completion variant~\cite{BBC+22}. 
Improving upon these bounds is an appealing challenge that may require 
a novel approach. 
On the other hand, it would be interesting to develop lower bounds for kernelization on such problems. Finally, even if the use of unaffected vertices in the design of reduction rules is common, its combination with the structural properties of trivially perfect graphs in terms of their maximal cliques allowed us to design stronger reduction rules.  
We hope that 
the approach presented in this work may lead to finding or improving kernelization 
algorithms for some related problems. 
Let us for instance mention the 
cubic vertex-kernel for \textsc{Proper Interval Completion}~\cite{BP13} and the quartic one for \textsc{Ptolemaic Completion}~\cite{CGP21} that might be appropriate candidates.  

\bibliographystyle{plain}
\bibliography{bibliography}

\begin{thebibliography}{10}

\bibitem{BBC+22}
Gabriel Bathie, Nicolas Bousquet, Yixin Cao, Yuping Ke, and Th{\'e}o Pierron.
\newblock ({S}ub) linear kernels for edge modification problems toward
  structured graph classes.
\newblock {\em Algorithmica}, 84(11):3338--3364, 2022.

\bibitem{BPP10}
St{\'e}phane Bessy, Christophe Paul, and Anthony Perez.
\newblock Polynomial kernels for 3-leaf power graph modification problems.
\newblock {\em Discrete Applied Mathematics}, 158(16):1732--1744, 2010.

\bibitem{BP13}
St{\'e}phane Bessy and Anthony Perez.
\newblock Polynomial kernels for proper interval completion and related
  problems.
\newblock {\em Information and Computation}, 231:89--108, 2013.

\bibitem{BHH+23}
Ulrik Brandes, Michael Hamann, Luise H{\"a}user, and Dorothea Wagner.
\newblock Skeleton-based clustering by quasi-threshold editing.
\newblock In {\em Algorithms for Big Data: DFG Priority Program 1736}, pages
  134--151. Springer, 2023.

\bibitem{BHS15}
Ulrik Brandes, Michael Hamann, Ben Strasser, and Dorothea Wagner.
\newblock Fast quasi-threshold editing.
\newblock In {\em Proceedings of the 23rd European Symposium on Algorithms,
  {ESA}}, pages 251--262, 2015.

\bibitem{BBD06}
Pablo Burzyn, Flavia Bonomo, and Guillermo Dur{\'a}n.
\newblock {NP}-completeness results for edge modification problems.
\newblock {\em Discrete Applied Mathematics}, 154(13):1824--1844, 2006.

\bibitem{Cai96}
Leizhen Cai.
\newblock Fixed-parameter tractability of graph modification problems for
  hereditary properties.
\newblock {\em Information Processing Letters}, 58(4):171--176, 1996.

\bibitem{CC15}
Leizhen Cai and Yufei Cai.
\newblock Incompressibility of {$H$}-free edge modification problems.
\newblock {\em Algorithmica}, 71(3):731--757, 2015.

\bibitem{CDF+23}
Christophe Crespelle, P{\aa}l~Gr{\o}n{\aa}s Drange, Fedor~V. Fomin, and Petr
  Golovach.
\newblock A survey of parameterized algorithms and the complexity of edge
  modification.
\newblock {\em Computer Science Review}, 48:100556, 2023.

\bibitem{CGP21}
Christophe Crespelle, Benjamin Gras, and Anthony Perez.
\newblock Completion to chordal distance-hereditary graphs: a quartic
  vertex-kernel.
\newblock In {\em Proceedings of the 47th International Workshop on
  Graph-Theoretic Concepts in Computer Science, {WG}}, pages 156--168, 2021.

\bibitem{CPT22}
Christophe Crespelle, R{\'{e}}mi Pellerin, and St{\'{e}}phan Thomass{\'{e}}.
\newblock A quasi-quadratic vertex kernel for cograph edge editing, 2022.

\bibitem{DDL+22}
P{\aa}l~Gr{\o}n{\aa}s Drange, Markus~Fanebust Dregi, Daniel Lokshtanov, and
  Blair~D. Sullivan.
\newblock On the threshold of intractability.
\newblock {\em Journal of Computer and System Sciences}, 124:1--25, 2022.

\bibitem{DFP+15}
P{\aa}l~Gr{\o}n{\aa}s Drange, Fedor~V. Fomin, Micha{\l} Pilipczuk, and Yngve
  Villanger.
\newblock Exploring the subexponential complexity of completion problems.
\newblock {\em ACM Transactions on Computation Theory}, 7(4):1--38, 2015.

\bibitem{DP18}
P{\aa}l~Gr{\o}n{\aa}s Drange and Micha{\l} Pilipczuk.
\newblock A polynomial kernel for trivially perfect editing.
\newblock {\em Algorithmica}, 80(12):3481--3524, 2018.

\bibitem{DPT21}
Ma{\"e}l Dumas, Anthony Perez, and Ioan Todinca.
\newblock Polynomial kernels for strictly chordal edge modification problems.
\newblock In {\em Proceedings of the 16th International Symposium on
  Parameterized and Exact Computation, {IPEC}}, pages 17:1--17:16, 2021.

\bibitem{DPT23}
Ma{\"{e}}l Dumas, Anthony Perez, and Ioan Todinca.
\newblock A cubic vertex-kernel for trivially perfect editing.
\newblock {\em Algorithmica}, 85(4):1091--1110, 2023.

\bibitem{Edmonds65}
Jack Edmonds.
\newblock Paths, trees, and flowers.
\newblock {\em Canadian Journal of mathematics}, 17:449--467, 1965.

\bibitem{FLM+12}
Fedor~V. Fomin, Daniel Lokshtanov, Neeldhara Misra, and Saket Saurabh.
\newblock Planar {F}-deletion: Approximation, kernelization and optimal {FPT}
  algorithms.
\newblock In {\em Proceedings of the 53rd Annual {IEEE} Symposium on
  Foundations of Computer Science, {FOCS}}, pages 470--479, 2012.

\bibitem{FLS+19}
Fedor~V. Fomin, Daniel Lokshtanov, Saket Saurabh, and Meirav Zehavi.
\newblock {\em Kernelization: theory of parameterized preprocessing}.
\newblock Cambridge University Press, 2019.

\bibitem{GJ02}
Michael~R. Garey and David~S. Johnson.
\newblock {\em Computers and Intractability: {A} Guide to the Theory of
  {NP}-Completeness}.
\newblock W. H. Freeman, 1979.

\bibitem{Golumbic78}
Martin~Charles Golumbic.
\newblock Trivially perfect graphs.
\newblock {\em Discrete Mathematics}, 24(1):105--107, 1978.

\bibitem{GHS+20}
Lars Gottesb{\"u}ren, Michael Hamann, Philipp Schoch, Ben Strasser, Dorothea
  Wagner, and Sven Z{\"u}hlsdorf.
\newblock Engineering exact quasi-threshold editing.
\newblock In {\em Proceedings of the 18th International Symposium on
  Experimental Algorithms, {SEA}}, pages 10:1--10:14, 2020.

\bibitem{GHP+13}
Sylvain Guillemot, Fr{\'{e}}d{\'{e}}ric Havet, Christophe Paul, and Anthony
  Perez.
\newblock On the (non-)existence of polynomial kernels for {P}$_l$-free edge
  modification problems.
\newblock {\em Algorithmica}, 65(4):900--926, 2013.

\bibitem{Guo07}
Jiong Guo.
\newblock Problem kernels for {NP}-complete edge deletion problems: Split and
  related graphs.
\newblock In {\em Proceedings of the 18th International Symposium on Algorithms
  and Computation, {ISAAC}}, pages 915--926, 2007.

\bibitem{KST99}
Haim Kaplan, Ron Shamir, and Robert~E. Tarjan.
\newblock Tractability of parameterized completion problems on chordal,
  strongly chordal, and proper interval graphs.
\newblock {\em SIAM Journal on Computing}, 28(5):1906--1922, 1999.

\bibitem{Karp72}
Richard~M. Karp.
\newblock Reducibility among combinatorial problems.
\newblock In {\em Complexity of Computer Computations}, pages 85--103.
  Springer, 1972.

\bibitem{KU08}
Christian Komusiewicz and Johannes Uhlmann.
\newblock A cubic-vertex kernel for flip consensus tree.
\newblock {\em Algorithmica}, 68(1):81--108, 2014.

\bibitem{KW13}
Stefan Kratsch and Magnus Wahlstr{\"{o}}m.
\newblock Two edge modification problems without polynomial kernels.
\newblock {\em Discrete Optimization}, 10(3):193--199, 2013.

\bibitem{LJY+15}
Yunlong Liu, Jianxin Wang, Jie You, Jianer Chen, and Yixin Cao.
\newblock Edge deletion problems: Branching facilitated by modular
  decomposition.
\newblock {\em Theoretical Computer Science}, 573:63--70, 2015.

\bibitem{MS22}
D{\'a}niel Marx and R.B. Sandeep.
\newblock Incompressibility of {H}-free edge modification problems: Towards a
  dichotomy.
\newblock {\em Journal of Computer and System Sciences}, 125:25--58, 2022.

\bibitem{NG13}
James Nastos and Yong Gao.
\newblock Familial groups in social networks.
\newblock {\em Social Networks}, 35(3):439--450, 2013.

\bibitem{NSS00}
Assaf Natanzon, Ron Shamir, and Roded Sharan.
\newblock A polynomial approximation algorithm for the minimum fill-in problem.
\newblock {\em SIAM Journal on Computing}, 30(4):1067--1079, 2000.

\bibitem{NdM15}
Jaroslav Ne{\v{s}}et{\v{r}}il and Patrice~Ossona De~Mendez.
\newblock On low tree-depth decompositions.
\newblock {\em Graphs and combinatorics}, 31(6):1941--1963, 2015.

\bibitem{Sharan02}
Roded Sharan.
\newblock {\em Graph modification problems and their applications to genomic
  research}.
\newblock PhD thesis, Tel-Aviv University, 2002.

\bibitem{TDH+08}
Marc Tedder, Derek Corneil, Michel Habib, and Christophe Paul.
\newblock Simpler linear-time modular decomposition via recursive factorizing
  permutations.
\newblock In {\em : Proceedings of the 35th International Colloquium on
  Automata, Languages, and Programming, {ICALP}}, pages 634--645, 2008.

\bibitem{YCC96}
Jing-Ho Yan, Jer-Jeong Chen, and Gerard~J. Chang.
\newblock Quasi-threshold graphs.
\newblock {\em Discrete applied mathematics}, 69(3):247--255, 1996.

\end{thebibliography}

\end{document}